\documentclass[reqno]{amsart}

\usepackage{amsmath,amsthm,amssymb}
\usepackage{stmaryrd}          
\usepackage{mathtools}
\usepackage{hyperref}
\hypersetup{colorlinks=true,linkcolor=black,citecolor=black,urlcolor=black}
\usepackage{enumitem}
\usepackage{microtype}
\usepackage{proof}

\newtheorem{theorem}{Theorem}[section]
\newtheorem{lemma}[theorem]{Lemma}

\newtheorem{proposition}[theorem]{Proposition}
\theoremstyle{definition}
\newtheorem{definition}[theorem]{Definition}
\newtheorem{example}[theorem]{Example}
\theoremstyle{remark}

\newcommand{\base}[1]{\mathcal{#1}}
\newcommand{\Bas}{\base{B}}
\newcommand{\BasC}{\mathcal{C}}
\newcommand{\bder}[1]{\mathrel{\supp_{\!#1}}}
\newcommand\supp{\Vdash}
\newcommand{\suppB}[1]{\supp_{\!#1}}
\newcommand{\IPL}{\textsf{IPL}}
\newcommand{\NJ}{\mathsf{NJ}}
\newcommand{\baseext}{\sqsupseteq}

\newcommand{\enc}[1]{\llbracket#1\rrbracket}

\title{Support is Search}

\author{Alexander V.\ Gheorghiu}
\address{University of Southampton \and University College London}
\email{a.v.gheorghiu@soton.ac.uk}

\keywords{proof-theoretic semantics, base-extension semantics, intuitionistic propositional logic, logic programming, hereditary Harrop formulae, uniform proofs}
\subjclass[2020]{03B20, 68N17, 03F50}

\begin{document}

\begin{abstract}
Sandqvist's base-extension semantics for intuitionistic propositional logic
defines a support relation parametrised by atomic bases, with validity identified
as support in every base. Sandqvist's completeness theorem answers the
\emph{global} question: which formulae are valid? This paper addresses the
\emph{local} question: given a fixed base, what does support in that base
correspond to? We show that support in a fixed base coincides with proof-search
in a second-order hereditary Harrop logic program, via an encoding of formulae
as logic-programming goals. This encoding proceeds by reading the semantic
clauses in continuation-passing style, revealing that the universal quantifiers
over base extensions and atoms appearing in those clauses are not domain-ranging
quantifiers over a completed totality, but eigenvariables governed by a standard
freshness discipline. Base-extension semantics thereby admits a fully
constructive and computationally transparent interpretation: support is
proof-search. The result complements Sandqvist's global theorem with a local
correspondence, vindicates the anti-realist foundations of the framework on its
own terms, and opens the way for implementing the semantics in modelling tasks. 
\end{abstract}

\maketitle

\section{Introduction}
\label{sec:intro}

Schroeder-Heister dubbed the programme concerned with grounding the meaning of logical constants not in denotational or model-theoretic notions, but in inferential or proof-theoretic behaviour `proof-theoretic semantics' (P-tS)~\cite{SEP-PtS}. The programme, developed in its modern form by Prawitz~\cite{Prawitz1971ideas,Prawitz1973towards}, Dummett~\cite{Dummett1976,Dummett1978,Dummett1991logical}, and Schroeder-Heister~\cite{Schroeder2006validity,Schroeder2007modelvsproof,SEP-PtS}, takes as its point of departure the anti-realist thesis that meaning must be publicly manifestable and therefore tied to what a speaker can in principle demonstrate, rather than to potentially verification-transcendent truth conditions. On this view, the meaning of a sentence is nothing over and above the capacity to recognise proofs of it.

A particularly well-known realisation of this programme is the \emph{base-extension semantics} (B-eS) for intuitionistic propositional logic due to Sandqvist~\cite{Sandqvist2015IL}. The semantics takes as its primitive notion a \emph{base} $\base{B}$ --- a collection of rules governing \emph{atomic} sentences. This has an intuitive notion of derivability; we write  $\vdash_{\base{B}} p$ to denote that an atom $p$ is derivable in $\base{B}$. The meaning of the logical constants is given by an inductively defined relation called \emph{support} with clauses for each of the connectives. 

For example, the clause for implication ($\to$) is:
\[
\supp_{\base{B}} \varphi \to \psi \qquad \text{iff} \qquad \text{for every $\base{C} \baseext \base{B}$, if $\supp_{\base{C}} \varphi$, then $\supp_{\base{C}} \psi$,}
\]
where $\base{C} \baseext \base{B}$ means that $\base{C}$ extends $\base{B}$ (i.e.\ contains at least all the rules of $\base{B}$). While this may be reminiscent of Kripke semantics for intuitionistic logic~\cite{kripke1965semantical}, the analogy can be misleading. This is most clear in two place. 

First, taking the standard clause for disjunction,
\[
\supp_{\base{B}} \varphi \lor \psi \qquad \text{iff} \qquad \supp_{\base{B}} \varphi \text{ or } \supp_{\base{B}} \psi,
\]
does not yield intuitionistic logic (see Piecha et al.~\cite{Piecha2015failure,Piecha2016completeness,Piecha2019incompleteness}), but rather general inquisitive logic (see Stafford~\cite{Stafford2021}). To get intuitionistic logic, Sandqvist~\cite{Sandqvist2015IL} instead adopts the following clause:
\[
\begin{array}{ccc}
  \supp_{\base{B}} \varphi \lor \psi & \text{iff} & \text{for every $\base{C} \baseext \base{B}$ and every atom $p$,} \\[4pt]
  & & \text{if } \varphi \supp_{\base{C}} p \text{ and } \psi \supp_{\base{C}} p \text{, then } \supp_{\base{C}} p.
\end{array}
\]

Secondly, what logic the semantics gives depends on precisely what is meant by \emph{base}. Sandqvist~\cite{Sandqvist2015IL} proved that
\[
\supp_{\base{B}} \varphi \text{ for all $\base{B}$} \qquad \text{iff} \qquad \varphi \textbf{ is intuitionistically valid}
\]
when $\base{B}$ are higher-level atomic systems. But Sandqvist~\cite{Sandqvist2009CL}, Gheorghiu~\cite{FOL}, and Stafford et al.~\cite{PeichaSchroederHeisterStafford2025} have each shown that 
\[
\supp_{\base{B}} \varphi \text{ for all $\base{B}$} \qquad \text{iff} \qquad \varphi \textbf{ is classically valid}
\]
when $\base{B}$ are first-level atomic systems. This change of logic comes without a change in the definition of the logical constants. 

Observe that in both cases, soundness and completeness concern \emph{global} support. That is, they characterise the formulae supported in \emph{every} base $\base{B}$. In this paper, we consider the \emph{local} question: \emph{what does $\supp_{\base{B}} \varphi$ correspond to for a \emph{fixed} base $\base{B}$?} 

The local question has received some attention previously. In particular, it is clear that it cannot be characterized by \emph{export}:
\[
\supp_{\base{B}} \varphi \qquad \text{is not equivalent to} \qquad \base{B}^{\ast} \vdash \varphi,
\]
where $\base{B}^{\ast}$ is a natural representation of a base as a set of formulae. A telling counterexample involves disjunction: by the disjunction property, $\base{B}^{\ast} \vdash A \lor B$ is equivalent to $\base{B}^{\ast} \vdash A$ or $\base{B}^{\ast} \vdash B$, but $\supp_{\base{B}} \varphi \lor \psi$ need not imply $\supp_{\base{B}} \varphi$ or $\supp_{\base{B}} \psi$. 

The challenge for understanding $\supp_{\base{B}} \varphi$ is that the definition of support involves several quantification over bases and atoms and meta-level implications, as seen in the example clauses above. Read in a realist manner, with the meta-level ranging over a completed totality and no constructive content passed from $\varphi$ to $\psi$, the semantics seemingly betrays its anti-realist foundations. Schroeder-Heister~\cite{Schroeder2012categorical} has indeed been critical of some of these aspects of the semantics saying that despite its difference from model-theoretic semantics, proof-theoretic semantics still adheres to the `dogmas' of standard semantics. 

But what alternative reading available? This paper provides an answer to these questions by giving a fully computational account of $\supp_{\base{B}} \varphi$. We show that it corresponds to proof-search in a logic programming language. That is, we define an operational judgment relation $\vdash_O$ and associate to each formula $\varphi$ a formula $\enc{\varphi}$ such that the following holds:
\[
  \supp_{\base{B}} \varphi \qquad \text{iff} \qquad \base{B} \vdash_O \enc{\varphi},
\]
The universal quantifiers in the support judgment are governed by an eigenvariable discipline in the operational semantics of $\vdash_0$. Accordingly, the meaning of the support judgment is given by a proof-search procedure. \emph{Support is search.}

\subsection*{Contribution.}
This paper contributes to the understanding of base-extension semantics the following:
\begin{itemize}
    \item \textbf{Meta-theoretic commitments.} The clauses of the support relation need not be regarded as ranging over completed infinite sets --- a realist presupposition at odds with the anti-realist foundations of the programme --- but are given a constructive interpretation internal to the proof-search framework, vindicating the coherence of the framework. 
    \item \textbf{Computational content.} We give an explicit procedure --- executing $\enc{\varphi}$ as a logic-programming goal against $\base{B}$ --- that finds a witnessing uniform derivation whenever $\supp_{\base{B}} \varphi$ holds; support is not merely effectively verifiable in principle but searchable in practice.
    \item \textbf{Local soundness and completeness.} For any fixed base $\base{B}$,
    \[
    \supp_{\base{B}} \varphi \qquad \mbox{iff} \qquad \base{B} \vdash_{\mathcal{O}} \llbracket\varphi\rrbracket
    \]
    --- complementing Sandqvist's global result with a local correspondence that characterises exactly what support in a given base.
\end{itemize}

\subsection*{Organization.}
The paper is organised as follows. Section~\ref{sec:atomic-systems} 
develops the framework of atomic systems and hereditary Harrop programs, 
including the second-order extension. Section~\ref{sec:bes} recalls 
Sandqvist's base-extension semantics and establishes the relevant 
background results. Section~\ref{sec:snc} introduces the encoding 
of \IPL{} formulae as logic-programming goals and 
proves the main soundness and completeness theorem. Section~\ref{sec:discussion} 
discusses the implications of our result for the proof-theoretic semantics 
programme, and in particular for the question of what the universal 
quantifiers in B-eS mean.

\section{Atomic Systems (Generalized)} \label{sec:atomic-systems}

\subsection*{Higher-level Atomic Systems}

Atomic systems are the foundation of base-extension semantics. They are often
presented within the framework of Gentzen-style natural deduction~\cite{Gentzen}.
The notion was made explicit by Prawitz~\cite{Prawitz1971ideas,Prawitz1973towards}.

\begin{definition}[First-level Atomic System]
A first-level \emph{atomic system} $\base{S}$ is a system of production rules governing atomic
propositions --- rules of the form
\[
  \frac{A_1 \quad \cdots \quad A_n}{C}
\]
where $A_1, \ldots, A_n$ and $C$ are all atomic propositions.
\emph{Derivability in $\base{S}$} --- denoted $\vdash_{\base{S}}$ ---is defined compositionally (i.e., in the standard way but without substitutions of atoms). 
\end{definition}

This basic format has been generalised to admit rules that \emph{discharge hypotheses} ---
a feature already present in natural deduction for logical constants, most perspicuously
in implication introduction and disjunction elimination. In its full generality, Piecha and Schroeder-Heister~\cite{Schroeder2016atomic,Piecha2017definitional} allow the premisses of
an atomic rule may themselves be finite sets of rules, and this process may be iterated
to yield rules of arbitrary finite level.

\begin{definition}[Higher-Level Atomic System]

A \emph{higher-level atomic system} is a set of atomic rules each of level
at most $n$.  

An \emph{atomic rule of level $0$} is an axiom
\[
  \frac{}{~~B~~}
\]
where $B$ is an atomic proposition. An \emph{atomic rule of level $1$} is
a production rule of the form
\[
  \frac{A_1 \quad \cdots \quad A_n}{B}
\]
where $A_1, \ldots, A_n$ and $B$ are atomic propositions. More generally,
an \emph{atomic rule of level $n+1$}, for $n \geq 1$, takes the form
\[
  \infer{B}{\deduce{A_1}{[\Sigma_1]} & \ldots & \deduce{A_k}{[\Sigma_k]}}
\]
where $B$ and each $A_i$ are atomic propositions and each $\Sigma_i$ is a
finite set of atomic rules of level at most $n$. 

A higher-level atomic rule of the above form is read as:
if, for each $i$, the atom $A_i$ is derivable from the rules in $\Sigma_i$,
then $B$ may be concluded, discharging the rules of $\Sigma_i$. 

Given an atomic system $\base{S}$, the derivability relation
$\vdash_{\base{S}}$ is defined inductively in the usual way.
\end{definition}

\begin{example}[A Third-Level Rule]
Consider the atomic system $\base{S}$ with single level-$3$ rule
\[
  \infer{d}{\deduce{c}{\left[\raisebox{-1ex}{\infer{b}{a}}\right]}}
\]
This
rule licenses the conclusion $d$ whenever $c$ is derivable from the
second-level assumption that $b$ is inferred from $a$. 

Suppose
$\base{S}$ also has the level-$1$ rule
\[
\infer{c}{\deduce{b}{[a]}}
\]
This rule witnesses the required condition for the first rule to be applied. Thus we may derive
$d$ in $\base{S}$. 
\end{example}

Higher-level atomic systems play an essential
role in delivering intuitionistic propositional
logic: restricting to first-level
production rules yields a semantics that characterises classical rather than
intuitionistic logic~\cite{Sandqvist2009CL}. The presence of higher-level rules is
thus what makes the intuitionistic character of the semantics possible.

While they offer a systematic way of organizing atomic systems, we prefer a more computational reading of higher-level rules to deliver the central result of this paper. To this end, we employ a computational representation of them inspired by Halln\"{a}s and Schroeder-Heister~\cite{hallnas1990proof,hallnas1991proof}. They observed that systems of clauses can be read as defining a natural deduction system. We reverse this reading: rules in a natural deduction system can be read as clauses. 

\subsection*{Rules-as-Clauses}
The clauses we use are defined by the
hereditary Harrop formulae as developed by Miller
et~al.~\cite{miller1989logical,miller1991uniform}. This is used as a general proof-theoretic foundation of logic programming. 

\begin{definition}[Definite Formulae, Goals, and Programs]
The propositional fragment of the hereditary Harrop language is generated by the
following grammar, in which $A$ ranges over atomic propositions, $D$ denotes a
\emph{definite} (program) formula, and $G$ denotes a \emph{goal} formula:
\[
  D \;::=\; A \;\mid\; G \supset A \;\mid\; D \cap D
  \qquad\qquad
  G \;::=\; A \;\mid\; D \supset G \;\mid\; G \cap G \;\mid\; G \cup G
\]
A set of definite formulae is a \emph{program} $P$.  A \emph{query} is a
sequent $P \longrightarrow G$ with $G$ a goal.
\end{definition}

The operational semantics of the logic programming language is a judgment $P \vdash_O G$ expressing that the program $P$ resolves the goal
$G$. It is based on the idea that the logical structure of the goal should drive a proof-search proceedure, with the program consulted
only when an atomic goal is reached.

\begin{definition}[Operational Semantics]
\label{def:op-prov}
The
operational semantics relation $\vdash_O$ is defined as follows:
\begin{itemize}
  \item \textsc{Fact}: $P \vdash_O A$ if $A \in [P]$.
  \item \textsc{Backchain}: $P \vdash_O A$ if there exists $G \supset A \in [P]$
        such that $P \vdash_O G$.
  \item \textsc{And}: $P \vdash_O G_1 \cap G_2$ iff $P \vdash_O G_1$ and
        $P \vdash_O G_2$.
  \item \textsc{Or}: $P \vdash_O G_1 \cup G_2$ iff $P \vdash_O G_1$ or
        $P \vdash_O G_2$.
  \item \textsc{Augment}: $P \vdash_O D \supset G$ iff $P \cup \{D\} \vdash_O G$.
\end{itemize}
where $[-]$ is defined inductively by: $P \subseteq [P]$; and if $D_1 \cap D_2 \in
[P]$, then $D_1 \in [P]$ and $D_2 \in [P]$.
\end{definition}

Gheorghiu and Pym~\cite{NAF} observed that the hierarchy of atomic systems above corresponds
exactly to the inductive depth of definite formulae: an $n$th-level atomic system encodes
precisely as a program whose definite formulae have nesting depth $n$. When $P$ and $G$ are the representations of an atomic system $\base{S}$
and an atom $a$ respectively, we have
\[
  P \vdash_O G \qquad \mbox{iff}\qquad \vdash_{\base{S}} a,
\]
so that operational resolution and derivability in the atomic system
coincide.
\begin{example}[Atomic Rules as Clauses]
The second-level rule 
\[
\infer{c}{\deduce{b}{[a]}}
\]
is represented by the clause
\[
(A \supset B) \supset C
\]
when $a$ is represented by $A$, $b$ by $B$, and $c$ by $C$. 

Using the rule amounts
to supplying a proof of $B$ while assuming $A$, from which $C$ is concluded. It is easy to see that the operational semantics expresses the same idea. 
Let $P = \{(A \supset B) \supset C\}$ and the query $P
\longrightarrow C$. By \textsc{Backchain}, $P \vdash_O C$ follows if $P \vdash_O
A \supset B$ --- that is, if $P, A \vdash_O B$ by \textsc{Augment}. Thus $P \vdash_O C$
holds precisely when $A$ suffices to derive $B$ in $P$,
\[
\infer{P \vdash_O C}{P,A \vdash B}
\]
\end{example}

Henceforth we
suppress the encoding of atomic systems as programs. That is, we write $\base{S} \vdash_O p$ for $\vdash_{\base{S}} p$, where $\base{S}^\ast$ is the clausal representation of $\base{S}$. 

Observe that this essentially reverse that standard way of reading derivability in a base. Rather than moving deductively by iteratively applying rules, we move \emph{reductively} moving from a putative conclusion to a sufficient set of premisses. The setup is designed so that for complex goals this reduction procedure is tractable. The exception is \textsc{Backchain} and \textsc{Fact} as if the program is \emph{infinite} it is unclear how one would find the relevant clause. It is to this problem we now turn. 

\subsection*{Infinite Systems}
Atomic systems are often infinite. Indeed, permitting infinite atomic systems is
crucial to the completeness of 
intuitionistic propositional logic with respect to support. However, not every
infinite atomic system need be admitted.  This is not merely a technical observation, but one important to the anti-realist motivations for the semantics. From an anti-realist
standpoint one cannot appeal to arbitrary infinite sets as primitive: to do so
would be to presuppose a completed infinity whose existence outruns any
manifestable warrant. 

Fortunately, one can restrict to using only infinite systems that are schematic in some way.  For instance, one may wish to
include every instance of a rule of the form
\[
  (A \supset X) \cap (B \supset X) \supset X,
\]
as $X$ ranges over atoms. 

So, what are the permissible schemas? We require only arbitrary substitutions of rules such as the above example. To express this, we introducing an \emph{atomic}
second-order quantifier $\forall$ binding atomic variables $X$:
\[
G\;::=\; \ldots \;\mid\; \forall X\, G \qquad\qquad D\;::=\; \ldots \;\mid\; \forall X\, D
\]
The above schema then becomes the single clause
\[
\forall X\bigl((A \supset X) \land (B \supset X) \supset X\bigr).
\]
In so doing, we have moved from requiring infinite systems/programs to finite ones. 

This use of the second-order quantifier is natural in the context of logic programming. It is already present in Miller
et~al.~\cite{miller1991uniform} without the restriction to atoms.

\begin{definition}[Second-Order Operational Rules]
The operational rules governing the second-order quantifier are:
\begin{itemize}
  \item \textsc{Generic}: $P \vdash_O \forall X\, G$ if $P \vdash_O G[A/X]$ for
        an atom $A$ fresh in $P$ and $G$.
  \item \textsc{Instance}: $P \vdash_O A$ if there is $\forall X\, D \in P$ such
        that $P, D[B/X] \vdash_O A$ for an atom $B \neq A$ fresh in $P$.
\end{itemize}
This is the standard eigenvariable discipline, applied at the level of predicate
position.
\end{definition}

Observe that the use of a fresh variable is crucial: if instead one required the condition to be
fulfilled for \emph{every} atomic instantiation, the same infinitary problem would
recur. We therefore adopt the following standing assumption throughout. 

We conclude by recording two standard properties, elementary to prove by induction,
which we shall use throughout.

\begin{lemma}[Cut]
\label{lem:cut}
If $P \vdash_O D$ and $Q, D \vdash_O G$, then $Q, P \vdash_O G$.
\end{lemma}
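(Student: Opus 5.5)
The plan is to prove the lemma by induction on the derivation of $Q, D \vdash_O G$, with a secondary induction on the structure of $D$. Before that, the hypothesis $P \vdash_O D$ has to be given a meaning, since $D$ is a definite formula rather than a goal. For $D$ atomic, $\supset$-headed or $\cap$-headed, $D$ is also a goal (we read $G' \supset A$ as a goal by treating $G'$ as the program formula it must be for well-formedness, i.e.\ via \textsc{Augment}), and $\forall X\,D'$ is a goal via \textsc{Generic}. So $P \vdash_O D$ unfolds by the goal rules into facts of the following shapes: for each member $G'\supset A$ of $[D]$ (closing under instances for quantified clauses), $P, G' \vdash_O A$, and for each atomic fact $A \in [D]$, $P \vdash_O A$. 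The first preliminary step is therefore a small inversion lemma: $P \vdash_O D$ iff for every clause $D_0$ reachable in $[D]$ (after instantiating outer $\forall$'s by fresh atoms), $P$ proves the goal-reading of $D_0$. This follows by a straightforward induction on $D$ using the \textsc{And}, \textsc{Augment} and \textsc{Generic} clauses.

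The second preliminary step is \emph{weakening}: if $P \vdash_O G$ then $P, R \vdash_O G$. This is an induction on derivations; the only care needed is with \textsc{Generic} and \textsc{Instance}, where the freshly chosen atom must be renamed away from $R$. That renaming is justified by an \emph{atom-renaming} lemma: derivability is preserved under a bijective renaming of atoms. This also is a routine induction, and I would state and prove it first.

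For the main induction on the derivation of $Q, D \vdash_O G$, the goal rules \textsc{And}, \textsc{Or}, \textsc{Augment} and \textsc{Generic} go directly through the induction hypothesis. For \textsc{Augment} the program grows to $Q, D, D'$; we regroup it as $(Q,D'), D$ and apply the hypothesis with $Q,D'$ in place of $Q$. For \textsc{Generic} we choose the fresh atom also fresh for $P$, using renaming. The interesting cases are the atomic ones, where the clause used comes from $[D]$ rather than from $[Q]$; if it comes from $[Q]$, the hypothesis simply applies to the premiss.
\begin{itemize}
\item In the \textsc{Fact} case with $A\in[D]$, the inversion lemma gives $P \vdash_O A$, and weakening gives $Q,P\vdash_O A$.
\item In the \textsc{Backchain} case with $G'\supset A\in[D]$, the induction hypothesis gives $Q,P \vdash_O G'$, and inversion gives $P, G' \vdash_O A$. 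Combining these requires a cut whose cut formula $G'$ is a goal of smaller size than $D$.
\end{itemize}

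This is the main obstacle. The cut formula switches from a definite formula to a goal, and a goal is used as a program clause via \textsc{Augment}. The fix I would try is to prove the lemma with the outer induction on the size of the cut formula (counting $D$ and the goals nested in it), and the inner induction on the height of the derivation of $Q,D\vdash_O G$. Since $G'$ sits inside the definite formula $G' \supset A$, and under the hereditary Harrop grammar it is built from program formulae strictly smaller than $D$, we can cut $Q,P\vdash_O G'$ into $P,G'\vdash_O A$ by the outer hypothesis. Doing this requires generalising the statement to cut a set of smaller definite formulae at once, since $G'$ may contain $\cap$ and decomposes into several clauses. The \textsc{Instance} case with $\forall X\,D'\in[D]$ is handled similarly. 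We instantiate with the fresh atom $B$, note that the unfolding of $P\vdash_O\forall X\,D'$ by \textsc{Generic} together with renaming yields $P\vdash_O D'[B/X]$, and then cut $D'[B/X]$, which is smaller in size than $D$, against the premiss $Q,D,D'[B/X]\vdash_O A$. We first remove $D$ by the inner hypothesis and then remove $D'[B/X]$ by the outer one.
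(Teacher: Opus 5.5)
The paper gives no proof of this lemma; it only asserts that it is elementary by induction. So there is no competing argument to compare against. Your lexicographic induction is the standard way to discharge it, and it is correct: outer induction on the size of the cut formula, inner induction on the height of the derivation of $Q,D\vdash_O G$, with renaming, weakening and inversion as auxiliaries. You are also right that induction on the derivation alone cannot close the \textsc{Backchain} case, because the residual cut on $G'$ has right premiss $P,G'\vdash_O A$, which is not a subderivation of the given one.

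Two simplifications. First, the generalisation to cutting a set of formulae is unnecessary. The lemma only makes sense for neutral $D$: it must be a goal for $P\vdash_O D$ and definite for $Q,D\vdash_O G$. Then each $G'$ with $G'\supset A\in[D]$ is itself neutral. It is definite because $G'\supset A$ is a goal, not because of the definite-clause grammar, which permits $\cup$ in $G'$. So $G'$ can be cut as a single formula, with its $\cap$-structure handled by $[\cdot]$ and \textsc{And}-inversion exactly as for $D$. Second, in the \textsc{Instance} case you should first rename the eigenvariable $B$ in the premiss away from $P$. The renaming lemma must be height-preserving so the inner hypothesis still applies. After that, the swap $C\leftrightarrow B$ turns $P\vdash_O D'[C/X]$ into $P\vdash_O D'[B/X]$. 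If \textsc{Instance} is read with arbitrary atoms, as the paper's own disjunction case uses it, you need instead an instantiation lemma replacing an atom absent from $P$ by an arbitrary one; this is equally routine.

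One caveat should be made explicit. Your renaming and weakening steps presuppose that fresh atoms are always available, e.g.\ finite programs. This is evidently the paper's intended standing assumption, though the sentence announcing it is truncated. Without it the lemma is false. Take $P=\mathrm{At}$ (every atom as a fact), $D=C$ and $Q=\emptyset$. Then $P\vdash_O C$ and $C\vdash_O\forall X(X\supset X)$, yet $P\nvdash_O\forall X(X\supset X)$: no atom is fresh for $P$, and \textsc{Generic} is the only rule for $\forall$-goals.
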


We write $\base{Q} \baseext \base{P}$ to say that $\base{Q}$ is a program extending
$\base{P}$.

\begin{lemma}[Monotonicity]
\label{lem:mono-op}
If $P \vdash_O G$ and $P \baseext P'$, then $P' \vdash_O G$.
\end{lemma}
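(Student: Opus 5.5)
Monotonicity will be proved by induction on the derivation of $P \vdash_O G$, where a derivation is the finite tree of rule applications witnessing the judgment under Definition~\ref{def:op-prov} together with \textsc{Generic} and \textsc{Instance}. We show that every rule application at $P$ can be replayed at $P'$. A preliminary observation: $P \subseteq P'$ implies $[P] \subseteq [P']$. This follows by a trivial induction on the closure defining $[-]$, since $P \subseteq P' \subseteq [P']$ and $[P']$ is closed under splitting conjunctions.

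\textsc{Fact} and \textsc{Backchain} are handled using this observation. If $A \in [P]$, then $A \in [P']$. If $G \supset A \in [P]$ with $P \vdash_O G$, then $G \supset A \in [P']$, and the induction hypothesis gives $P' \vdash_O G$.

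The rules that do not change the program are direct from the induction hypothesis. \textsc{And} and \textsc{Or} follow immediately. For \textsc{Augment}, from $P \cup \{D\} \vdash_O G$ we note that $P' \cup \{D\} \baseext P \cup \{D\}$. The induction hypothesis must therefore be stated for arbitrary extensions, i.e.\ we prove ``for all $P' \baseext P$'' by induction on the derivation, so it applies to the augmented program. The same remark covers \textsc{Instance}: from $\forall X\,D \in P$ and $P, D[B/X] \vdash_O A$, we have $\forall X\,D \in P'$. We then want $P', D[B/X] \vdash_O A$, which follows from the induction hypothesis provided $B$ is also fresh for $P'$ and distinct from $A$.

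The main obstacle is this freshness side condition in \textsc{Instance} and \textsc{Generic}, since $P'$ may contain the eigen-atom chosen in the original derivation. The plan is first to prove a renaming lemma. If $P \vdash_O G$ and $\sigma$ is a bijective renaming of atoms, then $P\sigma \vdash_O G\sigma$ with a derivation of the same height. This is again routine by induction, because every rule is invariant under permutations of atoms, and freshness is preserved by bijections.

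Given the renaming lemma, at a \textsc{Generic} step $P \vdash_O G[A/X]$ we choose an atom $A'$ fresh for $P'$ and $G$. Atoms are assumed to be in unlimited supply, and each program is assumed to mention only finitely many atoms; this is justified by the paper's move to finite programs via $\forall$. We then apply the transposition $(A\,A')$. Since $A$ does not occur in $P$ or $G$, the transposition turns the premise into $P(A\,A') \vdash_O G[A'/X]$. We apply the induction hypothesis at the same height to the extension $P' \baseext P(A\,A')$; this needs $A$ not in $P'$ either, or else we rename $A$ simultaneously, which is what the same-height renaming lemma permits. We then conclude with \textsc{Generic}. \textsc{Instance} is treated identically. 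For this reason the induction is on derivation height rather than on structure, so that renamed premises remain within the scope of the induction hypothesis.
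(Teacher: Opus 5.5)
Your proof is correct and is the induction the paper has in mind. The paper gives no details beyond calling the lemma `elementary to prove by induction', and you rightly read the hypothesis as $P' \baseext P$. Generalising the induction hypothesis over all extensions and using a same-height atom-renaming lemma supplies exactly the eigen-atom bookkeeping for \textsc{Generic} and \textsc{Instance} that the paper leaves implicit. Your assumption that programs mention only finitely many atoms is also necessary, not merely convenient: with $P' = \mathrm{At}$ no atom is fresh, so $\emptyset \vdash_O \forall X(X \supset X)$ does not transfer to $P'$.

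One small simplification: $A$ is fresh for $P$ and $A'$ is fresh for $P' \supseteq P$, so the transposition $(A\,A')$ fixes $P$ outright. Hence $P' \baseext P(A\,A')$ holds automatically, and your caveat about $A$ occurring in $P'$ is unnecessary.
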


\section{Base-extension Semantics}
\label{sec:bes}

We fix an infinite set $\mathrm{At}$ of \emph{atomic sentences}  ranged over by $p, q, r, \ldots$. The sense in which the language is infinite is that of a \emph{potential} infinity:
at any stage of a derivation, only finitely many witnesses have been introduced, but
no bound is placed on how many may be introduced in subsequent steps. This is
precisely the kind of infinity admissible from an anti-realist standpoint
--- it does not presuppose any completed totality, but only the open-ended
availability of new resources as derivation proceeds. 

The language of \IPL{} is given by the grammar
\[
  \varphi, \psi, \chi \;::=\; A \in \mathrm{At}\;\mid\; \bot \;\mid\;
    \varphi \land \psi \;\mid\; \varphi \lor \psi \;\mid\; \varphi \to \psi.
\]
We set $\lnot\varphi \coloneqq \varphi \to \bot$. The notation $\Gamma \vdash_O \varphi$
denotes intuitionistic derivability of $\varphi$ from $\Gamma$.

Base-extension semantics is grounded in atomic systems.
In Section~\ref{sec:atomic-systems} we generalized the concept of atomic systems using logic programming. This is the formalism with which we shall continue forward. 

\begin{definition}[Base]
    A \emph{base}
$\Bas$ is a program in the second-order logic-programming language.
\end{definition}

The base-extensions semantics is carried by a judgment relation called \emph{support} that is parametrized by bases.

\begin{definition}[Support]
The \emph{support} relation $\Gamma \suppB{\Bas} \varphi$ is defined by
the clauses in Figure~\ref{fig:support}, where $\Delta \neq \emptyset$ and $p$ ranges over
$\mathrm{At}$. 
\end{definition}

\begin{figure}[t]
\hrule 
\vspace{2mm}
\begin{align*}
    & \supp_{\base{B}} p  && \text{iff} \quad \supp_{\base{B}} p \tag{At}  \\[1mm]
    & \supp_{\base{B}} \varphi \to \psi && \text{iff} \quad \varphi \supp_{\base{B}} \psi \tag{$\to$} \\[1mm]
    & \supp_{\base{B}} \varphi \land \psi && \text{iff} \quad \supp_{\base{B}} \varphi \text{ and } \supp_{\base{B}} \psi \tag{$\land$} \\[1mm]
    & \supp_{\base{B}} \varphi \lor \psi && \text{iff} \quad \text{for any $\base{C} \baseext\base{B}$ and $p \in \textbf{At}$,} \notag \\
    & && \quad\quad\text{if } \varphi \supp_{\base{C}} p \text{ and } \psi \supp_{\base{C}} p \text{, then } \supp_{\base{C}} p \tag{$\lor$} \\[1mm]
    & \supp_{\base{B}} \bot && \text{iff} \quad \supp_{\base{B}} p \text{ for any } p \in \textbf{At} \tag{$\bot$} \\[1mm]
    & \Gamma \supp_{\base{B}} \varphi && \text{iff} \quad \text{for any $\base{C} \baseext\base{B}$, if } \supp_{\base{C}} \psi \text{ for any } \psi \in \Gamma \text{, then } \supp_{\base{C}} \varphi \tag{Inf}
\end{align*}
\vspace{2mm}
\hrule
\caption{Support in a Base} 
\label{fig:support}
\end{figure}

The following results were established by Sandqvist~\cite{Sandqvist2015IL}.

\begin{proposition}[Monotonicity]
\label{prop:mono}
If $\suppB{\Bas}\varphi$ and $\BasC \baseext\Bas$, then $\suppB{\BasC}\varphi$.
\end{proposition}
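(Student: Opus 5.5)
The argument is a structural induction on $\varphi$, following the clauses of Figure~\ref{fig:support}. Two background facts are needed throughout. First, the extension relation $\baseext$ on bases (programs) is reflexive and transitive: if $\base{D}\baseext\BasC\baseext\Bas$ then $\base{D}\baseext\Bas$, since each program contains the clauses of the one it extends. Second, the atomic case is settled by Lemma~\ref{lem:mono-op}. The (At) clause as displayed is circular; I read it as intended, namely $\supp_{\Bas} p$ iff $\Bas \vdash_O p$. Then $\Bas\vdash_O p$ and $\BasC\baseext\Bas$ give $\BasC\vdash_O p$ directly.

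The cases proceed as follows.

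\textbf{Falsum.} $\supp_{\Bas}\bot$ means $\supp_{\Bas}p$ for every atom $p$. The atomic case then gives $\supp_{\BasC}p$ for every $p$, i.e.\ $\supp_{\BasC}\bot$.

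\textbf{Conjunction.} This is immediate from the induction hypothesis applied to each conjunct.

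\textbf{Implication and disjunction.} These need no induction hypothesis. Their defining clauses already quantify over all extensions, and transitivity of $\baseext$ carries the property down. For $\to$: $\supp_\Bas\varphi\to\psi$ unfolds, via (Inf), to ``for all $\base{D}\baseext\Bas$, if $\supp_{\base{D}}\varphi$ then $\supp_{\base{D}}\psi$''. Any $\base{D}\baseext\BasC$ also satisfies $\base{D}\baseext\Bas$, so the same condition holds relative to $\BasC$, which is $\supp_\BasC\varphi\to\psi$. The $\lor$ clause is handled identically: the hypothesis ``for all $\base{D}\baseext\Bas$ and $p$, if $\varphi\supp_{\base{D}}p$ and $\psi\supp_{\base{D}}p$ then $\supp_{\base{D}}p$'' restricts to extensions of $\BasC$. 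Here the inner judgements $\varphi\supp_{\base{D}}p$ are left untouched, because the base $\base{D}$ in them is the same on both sides.

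I expect no real obstacle. The only delicate point is Lemma~\ref{lem:mono-op}, which must hold for the second-order operational rules, in particular the freshness side-conditions of \textsc{Generic} and \textsc{Instance}. A name that was fresh for $P$ may fail to be fresh for a larger $P'$. The fix is to rename eigenvariables, using the fact that derivations are invariant under bijective renaming of atoms fresh for the relevant program and goal. Since atoms form a potential infinity, a name fresh for $P'$ is always available. I take this as given with Lemma~\ref{lem:mono-op}, so the proposition reduces to the routine induction above.
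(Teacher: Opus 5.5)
Your proof is correct and is essentially the standard argument: the paper does not prove this proposition but attributes it to Sandqvist, whose proof is exactly this induction on $\varphi$. There, $\to$ and $\lor$ follow from transitivity of $\baseext$, and the atomic case from monotonicity of derivability, which here is Lemma~\ref{lem:mono-op} (its hypothesis $P\baseext P'$ should read $P'\baseext P$, as you implicitly take it). Your freshness remark is apt, but ``a fresh name is always available'' tacitly requires that the extending base not mention every atom, e.g.\ that bases are finite, as the paper's potential-infinity framing intends. Without this, $\{\forall X(X\supset X)\supset p\}$ supports $p$, yet adding all clauses $q\supset q$ for $q\in\mathrm{At}$ blocks \textsc{Generic} and loses support of $p$, so the assumption is genuinely needed.
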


\begin{proposition}
\label{prop:cons}
For any atom $p$ and set of atoms $S$: $S \suppB{\Bas} p \;\iff\; S
\bder{\Bas} p$.
\end{proposition}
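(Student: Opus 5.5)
We prove Proposition~\ref{prop:cons} by showing the two directions separately, unfolding clause (Inf) on the left and reading $S \bder{\Bas} p$ as derivability in the base extended by the atoms of $S$. In the programming formalism of Section~\ref{sec:atomic-systems} this means $\Bas \cup S \vdash_O p$, i.e.\ $\Bas \vdash_O (\bigwedge S) \supset p$ by \textsc{Augment}. Since $S$ consists of atoms, each $q \in S$ is a legitimate definite formula, so $\Bas \cup S$ is again a base and $\Bas \cup S \baseext \Bas$. Throughout we use that clause (At) reduces support of an atom in a base to derivability of that atom in the base, so $\suppB{\BasC} q$ iff $\BasC \vdash_O q$ for atoms $q$.

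($\Leftarrow$) Assume $\Bas \cup S \vdash_O p$, and let $\BasC \baseext \Bas$ with $\suppB{\BasC} q$ for every $q \in S$. By (At) we have $\BasC \vdash_O q$ for each $q \in S$. By Monotonicity (Lemma~\ref{lem:mono-op}), $\BasC \cup S \vdash_O p$. We then discharge the atoms of $S$ one at a time using Cut (Lemma~\ref{lem:cut}), with $P := \BasC$ and $D := q$, obtaining $\BasC \vdash_O p$. If $S$ is infinite, note that any $\vdash_O$-derivation uses only finitely many program clauses, so only finitely many cuts are needed. Hence $\suppB{\BasC} p$, and by (Inf) we conclude $S \suppB{\Bas} p$.

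($\Rightarrow$) Assume $S \suppB{\Bas} p$ and instantiate (Inf) at $\BasC := \Bas \cup S$. Every $q \in S$ is derivable there by \textsc{Fact}, so $\suppB{\BasC} q$ by (At). Therefore $\suppB{\BasC} p$, i.e.\ $\Bas \cup S \vdash_O p$, which is $S \bder{\Bas} p$.

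The main obstacle is the bookkeeping around the freshness conditions in \textsc{Generic} and \textsc{Instance}. When Monotonicity and Cut are applied in the second-order language, adding $S$ or passing to $\BasC$ may make a previously chosen eigen-atom non-fresh. I would handle this first by renaming eigen-atoms inside the derivation before weakening. This is a standard invariance of $\vdash_O$ under permutation of fresh atoms, and it is implicit in Lemmas~\ref{lem:cut} and~\ref{lem:mono-op}, which we are entitled to assume as stated. With that granted, the argument is just two unfoldings of (Inf) combined with (At).
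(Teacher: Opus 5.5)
Your argument is correct and is the standard proof of this fact: instantiate (Inf) at $\Bas \cup S$ for $(\Rightarrow)$, and compose derivations via Monotonicity and Cut for $(\Leftarrow)$. This is Sandqvist's original argument; the paper gives no proof of its own and simply attributes the result to him. The one step you should make explicit is the reduction to finitely many atoms when $S$ is infinite: a derivation of $\BasC \cup S \vdash_O p$ whose \textsc{Fact} steps use only atoms from a finite $S_0 \subseteq S$ is already a derivation of $\BasC \cup S_0 \vdash_O p$, because deleting clauses only weakens the freshness side-conditions of \textsc{Generic} and \textsc{Instance}.
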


Let $\vdash$ denote intuitionistic consequence.

\begin{theorem}[Sandqvist~\cite{Sandqvist2015IL}]
\label{thm:Sandqvist}
$\vdash \varphi$ iff $\supp_{\base{B}} \varphi$ for any $\base{B}$.
\end{theorem}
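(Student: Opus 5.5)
We prove the two directions separately. Soundness goes by induction on natural deductions. Completeness follows Sandqvist's strategy: build a single ``simulation base'' in which support of the relevant formulae coincides with derivability of tagged atoms.

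\emph{Soundness.} We show the stronger claim that if $\Gamma \vdash \varphi$ then $\Gamma \suppB{\Bas} \varphi$ for every $\Bas$. The induction is on an $\NJ$ derivation, checking that each rule preserves support in every base.
\begin{itemize}
\item The rules for $\land$ and $\to$ are immediate from the clauses ($\land$), ($\to$), (Inf) together with Proposition~\ref{prop:mono}.
\item The substantive cases are $\bot$-elimination and $\lor$-elimination. Both clauses only guarantee the conclusion for \emph{atomic} $p$, so we need two auxiliary lemmas, each proved by induction on $\chi$:
\begin{enumerate}
\item if $\suppB{\Bas} \bot$ then $\suppB{\Bas}\chi$;
\item if $\suppB{\Bas}\varphi\lor\psi$, $\varphi\suppB{\Bas}\chi$ and $\psi\suppB{\Bas}\chi$, then $\suppB{\Bas}\chi$.
\end{enumerate}
\item In lemma (2), the atomic case is the clause ($\lor$) itself. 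The $\land$ case splits componentwise.
\item The $\to$ case of (2) passes to an arbitrary $\BasC\baseext\Bas$ supporting the antecedent, and uses monotonicity to reapply the hypothesis there.
\item The $\lor$ case of (2) unfolds the clause for $\chi=\chi_1\lor\chi_2$ and reduces to the atomic case in the extended base.
\end{itemize}
Taking $\Gamma=\emptyset$ gives the left-to-right direction of Theorem~\ref{thm:Sandqvist}.

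\emph{Completeness.} Suppose $\suppB{\Bas}\varphi$ for all $\Bas$. Let $\Xi$ be the set of subformulas of $\varphi$, and fix an injection $\chi\mapsto p^\chi$ from $\Xi$ to atoms, with $p^a=a$ for atomic $a$ and fresh atoms otherwise. Let $\mathcal N$ be the finite program with clauses mirroring the $\NJ$ introduction and elimination rules on these tags:
\begin{itemize}
\item for $\land$: $p^{\chi}\cap p^{\psi}\supset p^{\chi\land\psi}$, together with the two projections;
\item for $\to$: $(p^\chi\supset p^\psi)\supset p^{\chi\to\psi}$ and $p^{\chi\to\psi}\cap p^\chi\supset p^\psi$;
\item for $\lor$: the two injections, and $\forall X\bigl(p^{\chi\lor\psi}\cap(p^\chi\supset X)\cap(p^\psi\supset X)\supset X\bigr)$;
\item for $\bot$: $\forall X(p^\bot\supset X)$.
\end{itemize}
The second-order quantifier is exactly what keeps this base finite.

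The key lemma is that for every $\BasC\baseext\mathcal N$ and every $\chi\in\Xi$,
\[
\suppB{\BasC}\chi \iff \BasC\vdash_O p^\chi .
\]
We prove it by induction on $\chi$, using Proposition~\ref{prop:cons} for atoms and for hypothetical atomic judgements.
\begin{itemize}
\item The $\to$ case passes through (Inf) and the \textsc{Augment} rule.
\item The $\lor$ case is the delicate one. For ($\Rightarrow$), instantiate the support clause with $p:=p^{\chi\lor\psi}$ and use the injection clauses. For ($\Leftarrow$), given $\BasC'\baseext\BasC$ and $p$ with $\chi\suppB{\BasC'}p$ and $\psi\suppB{\BasC'}p$, the induction hypothesis turns these into $\BasC', p^\chi\vdash_O p$ and $\BasC', p^\psi \vdash_O p$. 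We then fire the $\forall X$ clause.
\item The $\bot$ case is analogous, using $\forall X(p^\bot\supset X)$.
\end{itemize}
Applying the lemma with $\BasC=\mathcal N$ yields $\mathcal N\vdash_O p^\varphi$. Finally, we translate the operational derivation into an $\NJ$ derivation of $\varphi$ by replacing each $p^\chi$ with $\chi$. Each clause of $\mathcal N$ then becomes an admissible $\NJ$ inference, \textsc{Augment} becomes hypothesis discharge, and the atoms introduced by \textsc{Generic} and \textsc{Instance} stay atoms.

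\emph{Main obstacle.} The hardest step is the $\lor$ and $\bot$ cases of the key lemma, together with the translation back, because of the freshness side-condition of \textsc{Instance}. As stated, \textsc{Instance} instantiates $X$ with a fresh $B\neq A$, whereas closing an atomic goal $p$ requires the instance $X:=p$. My plan is to first prove a renaming lemma for $\vdash_O$: a derivation using a fresh eigen-atom may be transformed, via Lemma~\ref{lem:cut} and uniform substitution of atoms, into one using the needed instance. I would then check that this substitution commutes with the $p^\chi\mapsto\chi$ translation, so that the resulting $\NJ$ derivation is well formed.
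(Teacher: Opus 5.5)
Your architecture matches the paper's sketch of Sandqvist's proof. Soundness goes by checking each $\NJ$ rule, with your two auxiliary lemmas handling $\bot$- and $\lor$-elimination. Completeness goes via a simulation base $\mathcal N$ satisfying ($\dagger$), i.e.\ $\suppB{\BasC}\chi\iff\BasC\vdash_O p^\chi$ for all $\BasC\baseext\mathcal N$, followed by the translation ($\dagger\dagger$) back into $\NJ$. The soundness half is fine, as are the atomic, $\land$ and $\to$ cases and the $(\Rightarrow)$ direction of the $\lor$ case of your key lemma.

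The gap is the one you flag, and your proposed renaming lemma cannot close it. The clauses of your $\mathcal N$ have head $X$. Under \textsc{Instance} as stated, the only permitted instances therefore have a fresh head $B\neq A$. That $B$ occurs nowhere else in the program or goal, so it never becomes a subgoal, and the added clause is inert. Consequently your finite $\mathcal N$ does not satisfy the key lemma. For $\varphi=\bot\to a$, $\mathcal N=\{(p^\bot\supset a)\supset p^{\bot\to a},\ p^{\bot\to a}\cap p^\bot\supset a,\ \forall X(p^\bot\supset X)\}$.
\begin{itemize}
\item Trivially $\suppB{\mathcal N}\bot\to a$.
\item But $\mathcal N\vdash_O p^{\bot\to a}$ reduces to $\mathcal N,p^\bot\vdash_O a$.
\item The only non-circular route to that is the instance $X:=a$, which is exactly what the side-condition forbids.
\end{itemize}
So the statement you need is false, not merely unproved. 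Your renaming also runs the wrong way. Substituting a specific atom for a fresh one turns a fresh-atom derivation into a specific one, which is the \textsc{Generic} direction. Firing a program clause needs the converse. Routing through Lemma~\ref{lem:cut} does not help either: it needs $\BasC'\vdash_O D[p/X]$, which again requires a usable instance of $\forall X\,D$. Lemma~\ref{lem:gen-init} does not supply this under the stated rule; already $\forall X(X)\not\vdash_O\forall X(X)$.

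The repair is to avoid relying on second-order clauses. Take Sandqvist's schematic base, which contains, for \emph{every} atom $q$, the clauses $p^{\chi\lor\psi}\cap(p^\chi\supset q)\cap(p^\psi\supset q)\supset q$ and $p^\bot\supset q$. Programs are arbitrary sets of definite formulae, so this is a legitimate base. The $(\Leftarrow)$ direction of the $\lor$ case and both directions of the $\bot$ case then go through by \textsc{Backchain}, \textsc{And} and \textsc{Augment} alone, and \textsc{Instance} is never used. Alternatively, adopt the unrestricted atomic-instance rule of Miller et al.; under it your finite $\mathcal N$ works as written. In either case, fix one point in the translation back: the atom $q$ may itself be a tag $p^\theta$, so it does not simply ``stay an atom''. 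Such clauses translate to $(\chi\lor\psi)\land(\chi\to\theta)\land(\psi\to\theta)\to\theta$ or $\bot\to\theta$. These are still $\NJ$-derivable, so ($\dagger\dagger$) survives.
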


The soundness direction is standard: one verifies that
the support relation respects each rule of $\NJ$. For instance, one has that
$\suppB{\Bas}\varphi$ and $\suppB{\Bas}\psi$ together imply
$\suppB{\Bas}\varphi\land\psi$. The same holds for each connective --- so that any
$\NJ$-derivation of $\varphi$ induces $\supp\varphi$ by
induction on its structure. 

The proof of completeness is more
interesting. Sandqvist~\cite{Sandqvist2015IL} introduces a techniques of \emph{flattening} subformulae of $\varphi$ such that he can construct a special $\base{N}$ that bridges semantics and derivability for intuitionistic logic. His argument is both elementary and  constructive; it is also remarkably versatile, having been systemically deployed to obtain
completeness results for a range of logics~\cite{FOL,SOL,IMLL,BI}. 

Fix a formula $\varphi$ and let $\Xi$ be its set of subformula. Construct an injection $(-)^\flat:\Xi \to \textbf{At}$  that is the identity on atoms (i.e., $p^\flat = p$ for $p \in \Xi \cap \textbf{At}$) and to each complex sub-formula associates a \emph{fresh} atom (i.e., $(\chi)^\flat \not \in \Xi \cap \textbf{At}$  for $\chi \not \in \Xi \cap \textbf{At}$). This is the flattening operator. From this, Sandqvist constructs a  base $\base{N}$ with the following properties:
\begin{itemize}
    \item[($\dagger$)] First, that $\chi^\flat$ and $\chi$ are semantically interchangeable
in any extension of $\base{N'} \baseext\base{N}$,
\[
\supp_{\base{N}'} \chi \qquad \mbox{iff} \qquad \base{N}' \vdash_{O} \chi^\flat
\]
\item[($\dagger\dagger$)] Second, that derivability in $\base{N}$ simulates
$\NJ$-derivability. That is,
\[
\vdash_\base{N} \varphi^\flat \qquad \mbox{iff} \qquad \vdash \varphi
\]
\end{itemize}
These two together deliver global completeness:
\begin{align}
\supp_{\base{B}} \varphi \text{ for all $\base{B}$} \quad &\text{implies} \quad  \supp_{\base{N}} \varphi \notag \\ &\text{implies} \quad  \base{N} \vdash_O \varphi^\flat \tag{$\dagger$} \\ &\text{implies} \quad \vdash \varphi \tag{$\dagger\dagger$}
\end{align}

The construction was generalized by Gheorghiu and Pym~\cite{NAF}, where the
flattening operator $(-)^\flat$ is made sensitive also to a base $\base{B}$ in
the sense that $\chi^\flat$ does not occur in $\base{B}$. In this case, one
gets the following simulation lemma:
\[
    \supp_{\base{B}} \varphi \qquad \mbox{iff} \qquad \base{N} \cup \base{B}
    \vdash_O \psi^{\flat}
\]
This captures support locally, but at the cost of flattening $\varphi$ entirely
--- all logical structure is absorbed into the base $\base{N}$, leaving no
direct correspondence between the connectives of $\varphi$ and the
proof-search procedure.

It is worth noting that Sandqvist's construction of $\base{N}$ has an
independent precursor in the work of Mints~\cite{Mints1988}, who developed
resolution-based proof methods for intuitionistic logic by systematically
translating formulae into clausal form. Mints' goals were entirely
computational rather than semantic --- he was not working within
proof-theoretic semantics. Gheorghiu~\cite{MINTS} yet his clausal transformation and Sandqvist's
base construction coincide up to renaming for any given formula. This
convergence, arrived at from opposite directions, suggests that the encoding
is a natural one. 

On the one hand, this generalization of the flattening system indeed captures support locally. On the other, it trivializes the problem by removing all structure --- it flattens $\varphi$. This paper addresses the local question in a way that makes the relevant structures explicit and direct. We encode formulae directly as logic-programming goals, revealing that the universal quantifiers in the semantic clauses are not domain-ranging but eigenvariable-governed.

\section{Soundness and Completeness (Generalized)} \label{sec:snc}

The clause for disjunction in Figure~\ref{fig:support} has a distinctive character. This pattern is an instance of what functional programmers call \emph{continuation-passing style} (CPS): rather than returning a result directly, a CPS function takes an extra argument --- an explicit \emph{continuation} --- and ``returns'' by passing its computed result to that continuation. The definition of $\supp_{\base{B}} \varphi \lor \psi$ is a CPS definition of the connective, with the two supports $\varphi \supp_{\base{C}} p$ and $\psi \supp_{\base{C}} p$ playing the role of continuations. 

It is perhaps less clear that the other connectives have the same style, but Gheorghiu et al.~\cite{IMLL} showed that conjunction can be understood in CPS:
\[
\begin{array}{lcl}
\supp_{\base{B}} \varphi \land \psi \qquad &\mbox{iff}& \qquad \text{for any $\base{C} \sqsupseteq \base{B}$ and $p \in \textbf{At}$}\\
& &\qquad \text{if $\varphi,\psi \supp_{\base{C}} p$, then $\varphi,\psi \supp_{\base{C}} p$}
\end{array}
\]
This observation is useful as CPS is easy to express in logic programming. Therefore, we can systematically turn formulae into clauses using their definitions. 

\begin{definition}[Encoding]
To each formula $\varphi$, define a clause $\enc{\varphi}$ as follows:
\[
  \begin{aligned}
    \enc{p} &\;=\; p \\[2pt]
    \enc{\bot} &\;=\; \forall X (X)\\[2pt]
    \enc{\varphi \to \psi} &\;=\; \enc{\varphi} \supset \enc{\psi} \\[2pt]
    \enc{\varphi \land \psi} &\;=\; \forall X\, \big((\enc{\varphi} \supset \enc{\psi} \supset X) \supset X \big) \\[2pt]
    \enc{\varphi \lor \psi} &\;=\; \forall X\,\big((\enc{\varphi} \supset X) \supset (\enc{\psi} \supset X) \supset X \big)
  \end{aligned}
\]
\end{definition}
Under the operational semantics for logic programming above, the
predicate variable $X$ is always instantiated with atoms. Accordingly, the apparent
second-order quantification in the semantics for $\land$, $\lor$, and $\bot$
collapses to ordinary execution.

The main result of this paper is that support is not merely \emph{analogous} to logic-program
execution --- it \emph{is} logic-program execution:
\[
  \suppB{\Bas}\varphi \qquad\text{iff}\qquad \Bas \vdash_O \enc{\varphi}.
\]
The proof of this involves moving smoothly moving formulae and their encodings in atomic systems. To this end, we require some technical results. 

\begin{definition}[Neutral Formula]
    A logic programming formula $F$ is \emph{neutral} if it is both a definite clause and a goal formula. 
\end{definition}

There are many examples: all atoms are neutral. Indeed, by inspection on the definition of definite and goal formula, all $\cup$-free formulae are neutral. This observation enables the following:

\begin{lemma}
    For any intuitionistic formula $\varphi$, its encoding $\enc{\varphi}$ is neutral. 
\end{lemma}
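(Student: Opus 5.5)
Since each clause of the encoding builds $\enc{\varphi}$ from the encodings of the immediate subformulae, we proceed by structural induction on $\varphi$. We prove the stronger statement that $\enc{\varphi}$ is \emph{$\cup$-free} and built only from atoms, $\supset$, $\cap$, and the atomic quantifier $\forall X$. By the observation preceding the lemma, such formulae are neutral. Because the observation was stated only for the propositional grammar, we first re-check it in the extended second-order grammar. We show by induction on the formula that every $\cup$-free formula $F$ generated by atoms, $\supset$, $\cap$, $\forall X$ is simultaneously a definite formula $D$ and a goal formula $G$.

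For this auxiliary claim, the cases run as follows.
\begin{itemize}
\item \emph{Atom or bound variable $X$.} Both grammars have an atom case. A bound variable $X$ is treated as an atom, since it is only ever instantiated by atoms.
\item \emph{Conjunction $F_1 \cap F_2$.} Both grammars close under $\cap$, so the inductive hypothesis applied to each conjunct suffices.
\item \emph{Quantifier $\forall X\, F$.} Both grammars close under $\forall X$, so again the inductive hypothesis suffices.
\item \emph{Implication $F_1 \supset F_2$.} This is the only asymmetric case. As a definite formula we need $F_1$ to be a goal and $F_2$ to be an atom ($D ::= G \supset A$). As a goal we need $F_1$ definite and $F_2$ a goal.
\end{itemize}

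The implication case is the main obstacle. The definite grammar as printed only allows an \emph{atomic} head, so $\enc{p \to (q \to r)} = p \supset (q \supset r)$ is not literally of the form $G \supset A$. I would handle this in one of two ways. The preferred option is to read the definite grammar up to the standard hereditary Harrop convention $D ::= \ldots \mid G \supset D$, which is implicit in the paper's own use of clauses such as $\forall X((A\supset X)\cap(B\supset X)\supset X)$ and in \textsc{Instance}. Failing that, one can curry: $G_1 \supset (G_2 \supset A)$ is identified with $(G_1 \cap G_2) \supset A$, and by induction every $\cup$-free implication normalises to a conjunction of quantified clauses $\forall \vec X (G \supset A)$. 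With either reading, the implication case goes through: the inductive hypothesis makes $F_1$ both a goal (for the definite reading) and definite (for the goal reading), and makes $F_2$ both a definite clause and a goal.

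With the auxiliary claim in place, the lemma reduces to checking that no clause of the encoding introduces $\cup$. Atoms are fine. $\enc{\bot} = \forall X\,X$ uses only $\forall$ and the variable $X$. The $\to$ clause uses only $\supset$ applied to the encodings of its subformulae. The $\land$ and $\lor$ clauses are built purely from $\supset$ and $\forall X$ over the encodings of the subformulae and $X$; notably, disjunction is encoded in CPS rather than by $\cup$. By the inductive hypothesis the subformula encodings are $\cup$-free, so each $\enc{\varphi}$ is $\cup$-free, and hence neutral by the auxiliary claim.
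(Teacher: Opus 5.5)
Your argument is correct and follows the paper's own justification: the paper gives no separate proof, only the remark that $\enc{\varphi}$ is $\cup$-free and that $\cup$-free formulae are neutral ``by inspection'', which is exactly your decomposition. You are right that this inspection fails for the printed definite grammar with atomic heads (besides $p \supset (q \supset r)$, the antecedent $\enc{\varphi}\supset\enc{\psi}\supset X$ of $\enc{\varphi\land\psi}$ has a non-atomic head), so the reading $D ::= \ldots \mid G \supset D$ is genuinely needed; however, your clause $\forall X((A\supset X)\cap(B\supset X)\supset X)$ has an atomic head and does not support that reading, whereas the paper's $\land$/$\lor$ encodings and its backchaining on $F_1 \supset F_2$ in Lemma~\ref{lem:gen-init} do.
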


The useful property of neutral formulae is that it allows us to generalized basic properties of the operational provability relation: 

\begin{lemma}
\label{lem:gen-init}
If $F$ is a neutral formula, then $P, F \vdash_O F$.
\end{lemma}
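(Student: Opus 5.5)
We prove the statement by structural induction on the neutral formula $F$, simultaneously for all programs $P$. A neutral formula is built from atoms using $\supset$, $\cap$ and $\forall X$, where in $G \supset A$ or $D \supset G$ the antecedent must itself be neutral, since it must be both a goal and a definite formula. We use throughout that $F \in [P \cup \{F\}]$, and that $[-]$ is closed under decomposing $\cap$.

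The cases run as follows.

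\emph{Atom.} If $F = A$, then $A \in [P, A]$, so \textsc{Fact} gives $P, A \vdash_O A$.

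\emph{Conjunction.} If $F = F_1 \cap F_2$, \textsc{And} reduces the goal to $P, F_1 \cap F_2 \vdash_O F_i$ for $i = 1, 2$. We generalise the induction hypothesis to \emph{if $F \in [Q]$ then $Q \vdash_O F$}, and prove this stronger form throughout. Since $F_i \in [P, F_1 \cap F_2]$ by the definition of $[-]$, the hypothesis for $F_i$ applies.

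\emph{Implication.} If $F = F_1 \supset F_2$, \textsc{Augment} reduces the goal to $Q, F_1 \vdash_O F_2$ where $F_1 \supset F_2 \in [Q]$. Here the difficulty is that $F_2$ need not be atomic, so \textsc{Backchain} does not apply directly. Unfold further:
\begin{itemize}
\item If $F_2$ is an atom, apply \textsc{Backchain} with the clause $F_1 \supset F_2$. The remaining goal $Q, F_1 \vdash_O F_1$ follows from the hypothesis for $F_1$.
\item If $F_2$ is compound, decompose the goal $F_2$ by its goal rules (\textsc{And}, \textsc{Augment}, \textsc{Generic}) until an atomic goal $A$ is reached. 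In parallel, the program clause is read as a curried form $F_1 \supset \cdots \supset A$ (reading $\cap$ and $\forall$ componentwise). Then backchain on $A$ and discharge the premisses $F_1, \ldots$ by induction, since each was added to the program by \textsc{Augment}.
\end{itemize}

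To organise this cleanly, prove an auxiliary lemma by induction on $D$: \emph{if $D \in [Q]$ and $D$ is neutral, then $Q \vdash_O D$ read as a goal}. Its key sub-claim states that any clause whose head, after stripping antecedents, is atom $A$ can be used to close goal $A$ once all its antecedents are present in the program. This matches the standard "identity expansion" argument for uniform proofs.

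\emph{Quantifier.} If $F = \forall X\, F'$, \textsc{Generic} reduces the goal to $Q \vdash_O F'[A/X]$ with $A$ fresh. Continue decomposing the goal to an atom $C$ and use \textsc{Instance} on $\forall X\, F' \in Q$ with witness $A$, which adds $F'[A/X]$ to the program. The resulting instance then closes the goal by the induction hypothesis, since $F'[A/X]$ is smaller in connective count.

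\textbf{Main obstacle.} The quantifier case is the delicate one, because \textsc{Instance} demands a fresh $B \neq A$, whereas the natural witness is exactly the eigenvariable $A$ from \textsc{Generic}. Several instantiations may be needed, for instance in $\forall X((\ldots \supset X) \supset X)$, where $X$ must match the current atomic goal.

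I expect to handle this by performing \textsc{Instance} before \textsc{Generic}'s eigenvariable is compared. I would choose the fresh instance atom $B$, close $F'[B/X]$ against the goal $F'[A/X]$ by a renaming argument, or else read freshness relative to $P$ only and not the goal. If this fails, the fallback is to weaken the statement to clauses where the quantified variable is handled by a renaming-invariance lemma. That lemma says $\vdash_O$ is closed under bijective substitution of atoms not occurring in $P$, and it is what makes the identity case go through.
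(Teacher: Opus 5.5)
Your plan has the same skeleton as the paper's proof: induction on the size of $F$, \textsc{Fact} for atoms, \textsc{And} plus closure of $[-]$ under $\cap$ for conjunctions, and \textsc{Augment} then \textsc{Backchain} with the induction hypothesis on the antecedent for implications. However, the quantifier case, which you flag yourself, is left open, and that is a genuine gap. The missing step is the one the paper takes. After \textsc{Generic} introduces the eigenvariable $A$, instantiate the program clause $\forall X\,F'$ with that very atom $A$. It then suffices to show $P,\forall X\,F',F'[A/X] \vdash_O F'[A/X]$, which is the induction hypothesis for the smaller neutral formula $F'[A/X]$ over the program $P,\forall X\,F'$. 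The paper labels this step \textsc{Augment}, but what actually licenses it is instantiation by an \emph{arbitrary} atom. That is also how the paper itself uses \textsc{Instance} in the $\lor$ case of Theorem~\ref{thm:snc}, where it instantiates $X$ with the goal atom $p$. Since the goal $F'[A/X]$ need not be atomic, you also need a routine admissibility fact: if $\forall X\,D$ is in the program and $P, D[B/X] \vdash_O G$, then $P \vdash_O G$. It is proved by induction on the derivation, pushing the instantiation through the goal rules to the atomic leaves, where \textsc{Instance} applies. This is the precise form of your ``decompose to an atom $C$, then use \textsc{Instance}''.

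None of your proposed workarounds can succeed, because with the side condition of \textsc{Instance} read literally the lemma has no proof. Take $P=\emptyset$ and $F=\forall X\,X$ (which is $\enc{\bot}$). After \textsc{Generic} the goal is an atom $A$, and every application of \textsc{Instance} adds only an atom $B\neq A$, so $\forall X\,X \vdash_O \forall X\,X$ is underivable. A renaming argument cannot help, since a bijective renaming acts on clause and goal alike and never turns $F'[B/X]$ into $F'[A/X]$. Dropping only the requirement that $B$ differ from the goal atom, while keeping freshness in the program, also fails. For example, with $F=\forall X((C\supset X)\supset X)$, after \textsc{Generic} and \textsc{Augment} the program contains $C\supset A$, so $A$ is no longer fresh by the time the goal is atomic. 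Nor can \textsc{Instance} be applied before \textsc{Generic}, because it only fires on atomic goals. The freshness condition on the witness has to be discarded, not circumvented.

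Two smaller points. First, your generalisation to `if $F\in[Q]$ then $Q\vdash_O F$' genuinely improves on the paper's conjunction case. The paper's appeal to monotonicity runs in the wrong direction, since it would have to remove $F_i$ from the program. Your version does, however, require \textsc{Instance} to select $\forall X\,D$ from $[P]$ rather than from $P$. Second, the paper's grammar only has $G\supset A$ as an implicational definite clause, so a neutral implication already has an atomic head and your compound-head sub-case is vacuous. Were it not, your curried backchaining would not be available, because $[-]$ only splits $\cap$.
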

\begin{proof}
By induction on the structure of $F$.

\medskip
\noindent\textbf{Case $F = A$ (atom).}
Since $A \in [P, A]$, the result $P, A \vdash_O A$ follows immediately by
\textsc{Fact}.

\medskip
\noindent\textbf{Case $F = F_1 \supset F_2$.}
By \textsc{Augment}, it suffices to show $P, F_1 \supset F_2, F_1 \vdash_O F_2$.
Since $(F_1 \supset F_2) \in [P, F_1 \supset F_2, G]$, by \textsc{Backchain} it
suffices to show $P, F_1 \supset F_2, F_1 \vdash_O F_1$. Since $F$ is neutral, $F_1$
is a neutral formula of strictly smaller size, so the induction hypothesis
gives $P, F_1 \supset F_2, F_1 \vdash_O F_2$, as required.

\medskip
\noindent\textbf{Case $F = F_1 \cap F_2$.}
Since $F$ is neutral, $F_1$ and $F_2$ are neutral. By \textsc{And}, it
suffices to show $P, F_1 \cap F_2 \vdash_O F_i$ for $i = 1, 2$. By
definition of $[-]$, $F_1 \cap F_2 \in [P, F_1 \cap F_2]$ implies
$F_i \in [P, F_1 \cap F_2]$ for each $i$. Hence $P, F_1 \cap F_2,
F_i \vdash_O F_i$ by the induction hypothesis (since $F_i$ is neutral and
strictly smaller), and the result follows by monotonicity
(Lemma~\ref{lem:mono-op}).

\medskip
\noindent\textbf{Case $F = \forall X\, F'$.}
By \textsc{Generic}, taking a fresh atom $A$, it suffices to show
$P, \forall X\, F' \vdash_O F'[A/X]$.  Since $F$ is neutral, $F'[A/X]$ is
a neutral formula of strictly smaller size (it is the result of substituting
an atom for a predicate variable in $F'$, which is $\cup$-free).  In
particular, $F'[A/X]$ is a definite clause.  By \textsc{Augment} it
therefore suffices to show
\[
  P,\, \forall X\, F',\, F'[A/X] \;\vdash_O\; F'[A/X],
\]
which is exactly the induction hypothesis applied to the neutral formula
$F'[A/X]$.
\end{proof}

\begin{lemma}[Substitution]
\label{lem:substitution}
If $P \vdash_O G$ and $X \notin \mathrm{FV}(P)$, then $P \vdash_O G[F/X]$ for
any neutral clause $F$.
\end{lemma}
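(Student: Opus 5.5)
We prove the statement by induction on the height of the operational derivation witnessing $P \vdash_O G$, with a case split on the last rule applied (\textsc{Fact}, \textsc{Backchain}, \textsc{And}, \textsc{Or}, \textsc{Augment}, \textsc{Generic}, \textsc{Instance}). The statement must be general over $G$, since the induction passes through subgoals and augmented programs. Hence the hypothesis to induct on is: for every $P$ with $X \notin \mathrm{FV}(P)$ and every $G$, if $P \vdash_O G$ then $P \vdash_O G[F/X]$. Because $X$ does not occur free in $P$, substitution never touches the program side, so program clauses used by \textsc{Fact}, \textsc{Backchain} and \textsc{Instance} are unchanged.

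The structural goal cases are immediate. For \textsc{And} and \textsc{Or}, substitution commutes with $\cap$ and $\cup$, and we apply the induction hypothesis to the premisses. For \textsc{Augment} with $G = D \supset G'$, the premiss is $P, D \vdash_O G'$. Here $X$ may occur in $D$, so the induction hypothesis cannot be applied directly. We therefore strengthen the statement to allow substitution in the program as well: if $P \vdash_O G$ then $P[F/X] \vdash_O G[F/X]$. The lemma as stated is the special case $X \notin \mathrm{FV}(P)$. The neutrality of $F$ is exactly what keeps $D[F/X]$ a definite formula and $G[F/X]$ a goal. Since $X$ only ever stands in atomic position, replacing it by a neutral, $\cup$-free formula preserves both grammars. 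For \textsc{Generic}, we rename the bound variable and choose the fresh eigen-atom to be fresh also for $F$; the substitution then commutes with instantiation, and the induction hypothesis applies to the premiss.

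The main obstacle is the atomic cases in which the goal is the variable $X$ itself, or in which a program clause has head $X$. Under the strengthened statement, a clause $G' \supset X$ becomes $G'[F/X] \supset F$, which is no longer of the form needed for \textsc{Backchain} onto an atom. Two situations arise.

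If the goal is an atom $A \neq X$, the substituted clauses still give the required \textsc{Fact} or \textsc{Backchain} step, except when the head was $X$. That happens only if $X$ was resolved as an atom equal to $A$, which is impossible since $X$ is a variable, not an atom.

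If the goal is $X$ itself, the derivation used some clause in $[P]$ with head $X$ (or \textsc{Fact} on $X$). We must then show $P[F/X] \vdash_O F$ from $P[F/X] \vdash_O G'[F/X]$ and $G'[F/X] \supset F \in [P[F/X]]$. We handle this by combining Lemma~\ref{lem:gen-init}, which gives $F \vdash_O F$, with Lemma~\ref{lem:cut}. Concretely, from $P[F/X], G'[F/X]\supset F, G'[F/X] \vdash_O F$ we cut against $P[F/X] \vdash_O G'[F/X]$. To obtain the former, we establish a generalisation of Lemma~\ref{lem:gen-init}: if $F$ is neutral then $P, G\supset F, G \vdash_O F$. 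This is proved by induction on $F$, pushing the implication inward through $\cap$, $\supset$ and $\forall$ in the same way as the cases of Lemma~\ref{lem:gen-init}.

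If this atomic case proves too delicate for the general strengthened form, we fall back on the lemma exactly as stated, where $X \notin \mathrm{FV}(P)$. In that version $X$ can only appear as an atomic goal with no matching clause, so no derivation of it exists. The \textsc{Augment} case is then the only place where $X$ migrates into the program, and we treat it by the strengthened generalisation above.
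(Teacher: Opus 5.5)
There is a genuine gap.

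You correctly locate where a plain induction on derivations breaks. In the \textsc{Augment} case $X$ may be free in $D$, so the side condition $X\notin\mathrm{FV}(P)$ is not inherited by $P,D$. The paper's own proof passes over exactly this point: it applies the induction hypothesis to $P,D$ anyway and writes $(D\supset G')[F/X]$ as $D\supset G'[F/X]$.

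Strengthening to $P[F/X]\vdash_O G[F/X]$, however, only moves the gap into the atomic cases, and your treatment of them does not go through.
\begin{itemize}
\item A clause $G'\supset X$ becomes $G'[F/X]\supset F$, which has a compound head, and no rule can ever use such a clause. \textsc{Backchain} needs an atomic head, $[\,\cdot\,]$ only splits a top-level $\cap$, and \textsc{Instance} only fires on a top-level $\forall$. Under the stated grammar $D ::= G\supset A$, such a clause is not even definite. So $G\supset(F_1\cap F_2)$ or $G\supset\forall Y\,F'$ is inert, and no rule lets you ``push the implication inward''. The cases of Lemma~\ref{lem:gen-init} decompose $F$ only where it sits at the top of the program.
\item The cut with Lemma~\ref{lem:cut} fails as well. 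The body $G'[F/X]$ is an arbitrary goal, possibly containing $\cup$, so it cannot be placed in a program.
\item Your case analysis omits \textsc{Instance}. A goal $X$ can be closed by \textsc{Instance}, and after substitution the goal $F$ is no longer atomic, so \textsc{Instance} cannot be reapplied. Worse, an \textsc{Instance} step can instantiate a universal clause \emph{at} $X$. Substitution would then demand the instance at $F$, which atomic instantiation never provides. Your opening claim that the clauses used by \textsc{Instance} are unchanged is therefore false.
\item The fallback paragraph is circular, since it still sends \textsc{Augment} through the strengthened version.
\end{itemize}

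These are not repairable details, because the statement itself fails. Read \textsc{Instance} as permitting any atom, including the eigen-atom $X$; this is how the paper uses the lemma, on the atom produced by \textsc{Generic} in the $\land$-case of Theorem~\ref{thm:snc}. Let $A_1,A_2,C$ be distinct atoms and $P=\{\forall Y(Y\supset C)\}$.
\begin{itemize}
\item $P\vdash_O X\supset C$: augment with $X$, instantiate $Y$ at $X$, and backchain.
\item For the neutral $F=A_1\supset A_2$, however, $P\nvdash_O (A_1\supset A_2)\supset C$. From $P,A_1\supset A_2$ no atom at all is derivable: every usable clause is $A_1\supset A_2$ or some $B\supset C$, and there are no facts.
\end{itemize}
With \textsc{Instance} as literally written ($B\neq A$ fresh), already $\emptyset\vdash_O X\supset X$ while $\emptyset\nvdash_O\enc{\bot}\supset\enc{\bot}$. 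This also defeats Lemma~\ref{lem:gen-init}, on which your goal-$X$ case rests. Quantification restricted to atoms is not closed under substitution of compound formulae, so no induction on derivations can give the lemma in this generality.

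The missing idea is to prove only what the $\land$-case needs: an instantiation-overflow property of the particular shape $\forall X((D_1\supset D_2\supset X)\supset X)$. With this clause in the program, one can derive $(D_1\supset D_2\supset F)\supset F$ for every $\cup$-free $F$, by induction on $F$.
\begin{itemize}
\item Atoms are handled by instantiation.
\item For $F_1\supset F_2$, $F_1\cap F_2$ and $\forall Y\,F'$, build the continuation $D_1\supset D_2\supset F_2$ (resp.\ $D_1\supset D_2\supset F_i$, $D_1\supset D_2\supset F'[Y_0/Y]$) from the given one. Do this by supplying the hypothesis $F_1$, by projecting, or by instantiating at the eigen-atom $Y_0$, respectively, and then apply the induction hypothesis.
\end{itemize}
Even this argument requires an \textsc{Instance} rule that may instantiate at any atom and a \textsc{Backchain} that handles clauses with compound heads.
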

\begin{proof}
By induction on the derivation of $P \vdash_O G$.

\medskip
\noindent\textbf{Case \textsc{Fact}: $G = A \in [P]$.}
Since $X \notin \mathrm{FV}(P)$, we have $A[F/X] = A$ and $[P] = [P][F/X]$,
so $A \in [P]$ still holds. Hence $P \vdash_O A = A[F/X]$ by \textsc{Fact}.

\medskip
\noindent\textbf{Case \textsc{Backchain}: $G = A$, with $G' \supset A \in [P]$
and $P \vdash_O G'$.}
Since $X \notin \mathrm{FV}(P)$, substitution does not affect $P$, so
$G' \supset A \in [P]$ still holds after substitution, and $A[F/X] = A$.
By the induction hypothesis, $P \vdash_O G'[F/X]$. By \textsc{Backchain},
$P \vdash_O A$.

\medskip
\noindent\textbf{Case \textsc{And}: $G = G_1 \cap G_2$, with $P \vdash_O G_i$
for $i = 1,2$.}
By the induction hypothesis, $P \vdash_O G_i[F/X]$ for each $i$. Since
$(G_1 \cap G_2)[F/X] = G_1[F/X] \cap G_2[F/X]$, the result follows by
\textsc{And}.

\medskip
\noindent\textbf{Case \textsc{Or}: $G = G_1 \cup G_2$, with $P \vdash_O G_i$
for some $i$.}
By the induction hypothesis, $P \vdash_O G_i[F/X]$. By \textsc{Or},
$P \vdash_O G_1[F/X] \cup G_2[F/X] = (G_1 \cup G_2)[F/X]$.

\medskip
\noindent\textbf{Case \textsc{Augment}: $G = D \supset G'$, with
$P, D \vdash_O G'$.}
By the induction hypothesis applied with $P, D$ in place of $P$ --- noting
that $X \notin \mathrm{FV}(P)$ and that any free occurrence of $X$ in $D$
is replaced --- we obtain $P, D \vdash_O G'[F/X]$. By \textsc{Augment},
$P \vdash_O D \supset G'[F/X] = (D \supset G')[F/X]$.

\medskip
\noindent\textbf{Case \textsc{Generic}: $G = \forall Y\, G'$, with
$P \vdash_O G'[A/Y]$ for a fresh atom $A$.}
Renaming if necessary, assume $Y \neq X$. By the induction hypothesis,
$P \vdash_O G'[A/Y][F/X]$. Since $Y \neq X$ and $A$ is atomic (hence
$A[F/X] = A$), we have $G'[A/Y][F/X] = G'[F/X][A/Y]$. By \textsc{Generic},
$P \vdash_O \forall Y\, G'[F/X] = (\forall Y\, G')[F/X]$.

\medskip
\noindent\textbf{Case \textsc{Instance}: $G = A$, with $\forall Y\, D \in P$
and $P, D[B/Y] \vdash_O A$ for a fresh atom $B \neq A$.}
Since $X \notin \mathrm{FV}(P)$, the clause $\forall Y\, D$ is unaffected,
and $A[F/X] = A$. By the induction hypothesis, $P, D[B/Y] \vdash_O A$.
By \textsc{Instance}, $P \vdash_O A$.

\noindent This completes the induction.
\end{proof}

These background results suffice to prove soundness and completeness.

\begin{theorem}[Soundness and Completeness]
\label{thm:snc}
For any base $\Bas$ and formula $\varphi$,
\[
  \suppB{\Bas}\varphi \qquad\text{iff}\qquad \Bas \vdash_O \enc{\varphi}.
\]
\end{theorem}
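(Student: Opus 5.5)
The plan is to prove the stronger, hypothetical statement
\[
  \psi_1,\ldots,\psi_n \suppB{\Bas} \varphi \qquad\text{iff}\qquad \Bas,\enc{\psi_1},\ldots,\enc{\psi_n} \vdash_O \enc{\varphi},
\]
by induction on the structure of $\varphi$. The theorem is the case $n=0$. Strengthening is needed because the clause ($\to$) unfolds into a hypothetical judgment, and the clause ($\lor$) involves $\varphi \suppB{\BasC} p$. I would first settle the case $\Gamma=\emptyset$ while assuming the hypothetical form for strictly smaller formulae. The key tool linking hypotheses and bases is that, by Lemma~\ref{lem:neutral}-style reasoning (the encoding is neutral), each $\enc{\psi}$ is a definite clause. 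Hence $\BasC := \Bas \cup \{\enc{\psi}\}$ is itself a base extending $\Bas$. If the unhypothesised statement holds for $\psi$, then Lemma~\ref{lem:gen-init} gives $\BasC \vdash_O \enc{\psi}$, and so $\suppB{\BasC}\psi$.

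\textbf{Cases.} The atomic case is the definition together with Proposition~\ref{prop:cons}. For $\to$, \textsc{Augment} reduces $\Bas\vdash_O\enc{\varphi}\supset\enc{\psi}$ to $\Bas,\enc{\varphi}\vdash_O\enc{\psi}$, which we match against $\varphi\suppB{\Bas}\psi$.

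For the left-to-right direction, take $\BasC=\Bas\cup\{\enc{\varphi}\}$. Then $\suppB{\BasC}\varphi$ by the observation above, so $\suppB{\BasC}\psi$, and the induction hypothesis gives $\BasC\vdash_O\enc{\psi}$.

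For the right-to-left direction, let $\BasC\baseext\Bas$ with $\suppB{\BasC}\varphi$. The induction hypothesis gives $\BasC\vdash_O\enc{\varphi}$. Cut (Lemma~\ref{lem:cut}) with monotonicity (Lemma~\ref{lem:mono-op}) then yields $\BasC\vdash_O\enc{\psi}$, hence $\suppB{\BasC}\psi$.

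For $\land$, $\lor$ and $\bot$, use \textsc{Generic*}: $\Bas\vdash_O\enc{\varphi\lor\psi}$ iff $\Bas,\enc{\varphi}\supset p,\enc{\psi}\supset p\vdash_O p$ for a fresh atom $p$.

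For the right-to-left direction of $\lor$, fix $\BasC\baseext\Bas$ and an arbitrary atom $q$ with $\varphi\suppB{\BasC}q$ and $\psi\suppB{\BasC}q$. By the induction hypothesis in hypothetical form, $\BasC\vdash_O\enc{\varphi}\supset q$ and $\BasC\vdash_O\enc{\psi}\supset q$. Rename the fresh $p$ to $q$ using Lemma~\ref{lem:substitution} (treating the eigenvariable as a variable not free in $\Bas$), then cut away the two hypotheses to get $\BasC\vdash_O q$.

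For the left-to-right direction of $\lor$, instantiate the semantic clause with $\BasC=\Bas\cup\{\enc{\varphi}\supset p,\enc{\psi}\supset p\}$ and the fresh $p$. We must check $\varphi\suppB{\BasC}p$: for any $\base{D}\baseext\BasC$ with $\suppB{\base{D}}\varphi$, the induction hypothesis gives $\base{D}\vdash_O\enc{\varphi}$, and \textsc{Backchain} gives $\base{D}\vdash_O p$. The case of $\psi$ is the same. The case $\land$ is analogous, using the CPS form of conjunction. The case $\bot$ reduces to $\Bas\vdash_O p$ for fresh $p$ versus $\suppB{\Bas}q$ for all $q$, again via substitution of the eigenvariable.

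\textbf{Main obstacle.} The delicate step is the interplay between the eigenvariable discipline and the universal quantification over \emph{all} atoms and \emph{all} extensions in the semantic clauses. The right-to-left directions require that a derivation with a fresh $p$ transfers to arbitrary $q$ and arbitrary $\BasC$. This needs Lemma~\ref{lem:substitution} applied to atoms in place of variables, i.e.\ a renaming lemma for fresh atoms, and this is sound only because $p$ does not occur in $\Bas$. I would state this renaming fact explicitly first.

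A second subtlety is the \textsc{Instance} rule. When a program contains $\enc{\varphi\lor\psi}$ as a hypothesis (the $\to$ case with a disjunctive antecedent), backchaining must instantiate its quantifier with the current goal atom. Making the Cut step go through therefore requires checking that \textsc{Instance} supports using a universally quantified clause at the required atom, not only at a fresh one. If it fails as literally stated, I would read \textsc{Instance} as permitting instantiation by any atom.
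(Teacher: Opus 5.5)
Your proof is correct, given the reading of \textsc{Instance} you propose, and its skeleton is the paper's. It is an induction on $\varphi$, and the $\bot$, $\to$ and $\lor$ cases go exactly as there. For $(\Rightarrow)$ you use \textsc{Generic}, \textsc{Augment} and \textsc{Backchain}; for $(\Leftarrow)$ you use inversion, renaming of the eigen-atom, and cut. You also supply a step the paper skips in $\to(\Rightarrow)$: $\suppB{\BasC}\varphi_1$ for $\BasC=\Bas\cup\{\enc{\varphi_1}\}$, obtained via Lemma~\ref{lem:gen-init} and the induction hypothesis.

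The genuine divergence is the $\land$ case. The paper keeps the direct clause. For $(\Leftarrow)$ it substitutes the compound formula $\enc{\varphi_i}$ for the eigen-atom (Lemma~\ref{lem:substitution} with a neutral $F$), then projects using Lemma~\ref{lem:gen-init} and cut. You instead work with the CPS form of conjunctive support, so that, as for $\lor$, only atom-for-atom renaming is ever needed.

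Your route is the more robust one. In the paper's application, the derivation passes through $\Bas,\enc{\varphi_1}\supset\enc{\varphi_2}\supset A\vdash_O A$, where the eigen-atom occurs in the program. The \textsc{Augment} case of the proof of Lemma~\ref{lem:substitution} silently drops its side condition $X\notin\mathrm{FV}(P)$ exactly there. Substitution also produces a clause with a compound head, which \textsc{Backchain} as stated cannot use.

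The price of your route is that you need the non-trivial half of the CPS equivalence, which the paper only cites (and displays with a typo). From `$\varphi_1,\varphi_2\suppB{\BasC}q$ implies $\suppB{\BasC}q$ for all $\BasC\baseext\Bas$ and atoms $q$' you must show, by induction on $\chi$, that $\varphi_1,\varphi_2\suppB{\BasC}\chi$ implies $\suppB{\BasC}\chi$ for every formula $\chi$. You then take $\chi=\varphi_i$.

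Two further remarks.

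\emph{The hypothetical strengthening.} It is unnecessary, and read as an induction on the conclusion with arbitrary $\Gamma$ it would not close. Its atomic case already needs the theorem for arbitrary non-atomic hypotheses, which Proposition~\ref{prop:cons} does not provide. The only instances you actually use have immediate subformulae as hypotheses and an atomic conclusion. These follow from the unhypothesised induction hypothesis via your own key tool (Lemma~\ref{lem:gen-init} on $\BasC\cup\{\enc{\varphi_i}\}$), which is how the paper argues inline. So simply run the induction for $\Gamma=\emptyset$.

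\emph{The \textsc{Instance} rule.} Your worry is justified. With the literal side condition ($B\neq A$, fresh), a quantified clause can never be instantiated at an atom that matters, and the theorem fails. For example, $\suppB{\emptyset}(\bot\to p)$ holds, whereas $\emptyset\vdash_O\forall X(X)\supset p$ reduces by \textsc{Augment} to $\forall X(X)\vdash_O p$, which no rule closes. Lemma~\ref{lem:gen-init} likewise fails for $\enc{\bot}$. The paper's argument tacitly instantiates at arbitrary atoms (``instantiate $X$ with $p$'' in the $\lor$ case). So your reading of \textsc{Instance} as allowing any atom is the repair that both proofs need.
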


\begin{proof}
By simultaneous induction on the structure of $\varphi$, establishing both
directions together. Monotonicity of support under base extension
(Proposition~\ref{prop:mono}) and monotonicity of $\vdash_O$ under program
extension (Lemma~\ref{lem:mono-op}) are used freely throughout.\smallskip

 \noindent \textbf{Case $\varphi = p$:} Immediate from clause~\emph{(At)} of Figure~\ref{fig:support} and the
definition $\supp_{\Bas} p \;\coloneqq\; \Bas \vdash_O p$. \smallskip

\noindent \textbf{Case $\varphi = \bot$.} 
\begin{itemize}
\item[$(\Rightarrow)$:] Suppose $\suppB{\Bas}\bot$. By clause~($\bot$),
$\suppB{\Bas} p$ for every $p \in \mathrm{At}$, hence $\Bas \vdash_O p$
for every atom $p$. By \textsc{Generic},
taking a fresh atom $x$, we have $\Bas \vdash_O x$, so
$\Bas \vdash_O \forall x(x) = \enc{\bot}$.

\item[$(\Leftarrow)$:] Suppose $\Bas \vdash_O \forall X(X)$. By
\textsc{Generic}, $\Bas \vdash_O p$ for any atom $p$. Hence
$\suppB{\Bas} p$ for any $p$. Whence,
$\suppB{\Bas}\bot$ by clause~($\bot$).
\end{itemize}

\noindent \textbf{Case $\varphi = \varphi_1 \to \varphi_2$:} 
\begin{itemize}
\item[$(\Rightarrow)$:] Suppose $\suppB{\Bas}(\varphi_1 \to \varphi_2)$. By
clause~($\to$), $\varphi_1 \suppB{\Bas}\varphi_2$ --- that is, for every
$\BasC \baseext\Bas$, if $\suppB{\BasC}\varphi_1$ then
$\suppB{\BasC}\varphi_2$. Let $\base{C} = \base{B},\enc{\varphi_1}$.
By the \emph{induction hypothesis}, $\Bas \cup \{\enc{\varphi_1}\} \vdash_O \enc{\varphi_2}$. 
By \textsc{Augment}, $\Bas \cup \vdash_O \enc{\varphi_1} \supset \enc{\varphi_2}$. That is, $\Bas  \vdash_O \enc{\varphi_1 \supset \varphi_2}$, as required. 
\item[$(\Leftarrow)$:] Suppose $\Bas \vdash_O \enc{\varphi_1 \to \varphi_2}$. That is, $\Bas \vdash_O \enc{\varphi_1} \supset \enc{\varphi_2}$.
Let $\BasC \baseext\Bas$ with $\suppB{\BasC}\varphi_1$. By the induction
hypothesis, $\BasC \vdash_O \enc{\varphi_1}$. By cut (Lemma~\ref{lem:cut}) on the hypothesis,
$\BasC \vdash_O \enc{\varphi_2}$. By the induction
hypothesis, $\suppB{\BasC}\varphi_2$. Since $\BasC \baseext\Bas$ was
arbitrary, $\varphi_1 \suppB{\Bas}\varphi_2$ --- that is, $\suppB{\Bas}(\varphi_1 \to \varphi_2)$.
\end{itemize}

\textbf{Case $\varphi = \varphi_1 \land \varphi_2$:}
\begin{itemize}
\item[$(\Rightarrow)$:] Suppose $\suppB{\Bas}(\varphi_1 \land \varphi_2)$. By
clause~($\land$), $\suppB{\Bas}\varphi_i$ for $i=1,2$. Hence $\Bas \vdash_O
\enc{\varphi_i}$ by the induction hypothesis. We must show
\[
  \Bas \vdash_O \forall X\,\bigl((\enc{\varphi_1} \supset \enc{\varphi_2} \supset X) \supset X\bigr).
\]
By \textsc{Generic}, take a fresh atom $A$ and let $D \coloneqq
\enc{\varphi_1} \supset \enc{\varphi_2} \supset A$. By \textsc{Augment} it
suffices to show $\Bas, D \vdash_O A$. By \textsc{Backchain} on $D$ it
suffices to show $\Bas, D \vdash_O \enc{\varphi_i}$ for $i = 1, 2$, which
follows from $\Bas \vdash_O \enc{\varphi_i}$.

\item[$(\Leftarrow)$:] Suppose $\Bas \vdash_O \enc{\varphi_1 \land \varphi_2}$. By \textsc{Generic}, $\Bas \vdash_O (\enc{\varphi_1} \supset \enc{\psi_2} \supset X) \supset X$ for a fresh $X$. By substitution (Lemma~\ref{lem:substitution}), $\Bas \vdash_O (\enc{\varphi_1} \supset \enc{\psi_2} \supset \enc{\varphi_i}) \supset \enc{\varphi_i}$ for $i=1,2$; thus, $\Bas, (\enc{\varphi_1} \supset \enc{\psi_2} \supset \enc{\varphi_i}) \vdash_O \enc{\varphi_i}$ for $i=1,2$. 
Simultaneously, By Lemma~\ref{lem:gen-init}), $\Bas, \enc{\varphi_1},\enc{\varphi_2}  \vdash_O \enc{\varphi_i}$ for $i=1,2$. Hence, by \textsc{Argument} twice,  $\Bas  \vdash_O \enc{\varphi_1} \supset \enc{\varphi_2} \supset \enc{\varphi_i}$ for $i=1,2$. The desired result follows from cut (Lemma~\ref{lem:cut}).
\end{itemize}
\textbf{Case $\varphi = \varphi_1 \lor \varphi_2$:}
\begin{itemize}
\item[$(\Rightarrow)$:] Suppose $\suppB{\Bas}(\varphi_1 \lor \varphi_2)$. We must
show
\[
  \Bas \vdash_O \forall X\,\bigl((\enc{\varphi_1} \supset X) \supset (\enc{\varphi_2} \supset X) \supset X\bigr).
\]
By \textsc{Generic}, take a fresh atom $A$ and let $D_i \coloneqq
\enc{\varphi_i} \supset A$. By \textsc{Augment} it suffices to show
$\Bas, D_1, D_2 \vdash_O A$. Set $\BasC \coloneqq \Bas \cup \{D_1,
D_2\}$. For each $i$, we claim $\varphi_i \suppB{\BasC} A$: let $\BasC'
\baseext\BasC$ with $\suppB{\BasC'}\varphi_i$; by the induction
hypothesis $\BasC' \vdash_O \enc{\varphi_i}$; since $D_i \in \BasC
\subseteq \BasC'$, \textsc{Backchain} gives $\BasC' \vdash_O A$, hence
$\suppB{\BasC'} A$ by definition. So $\varphi_i
\suppB{\BasC} A$ for $i=1,2$. The hypothesis thus 
 yields
$\suppB{\BasC} A$; that is, $\BasC \vdash_O A$ by
definition, as required.
\item[$(\Leftarrow)$:] Suppose $\Bas \vdash_O \enc{\varphi_1 \lor \varphi_2}$. Let
$\BasC \baseext\Bas$ and $p \in \mathrm{At}$ with $\varphi_i
\suppB{\BasC} p$ for $i = 1, 2$. By \textsc{Instance} applied to the
encoding, and using Lemma~\ref{lem:mono-op} to lift to $\BasC$, we
instantiate $X$ with $p$: augmenting with $\enc{\varphi_1} \supset p$ and
$\enc{\varphi_2} \supset p$ and backchaining, it suffices to show $\BasC
\vdash_O \enc{\varphi_i}$ for each $i$, which is exactly the induction
hypothesis applied to $\varphi_i \suppB{\BasC} p$. Hence $\BasC \vdash_O
p$, which is $\suppB{\BasC} p$ by definition. Since $\BasC$
and $p$ were arbitrary, $\suppB{\Bas}(\varphi_1 \lor \varphi_2)$ by
clause~($\lor$).
\end{itemize}
This completes the induction.
\end{proof}

\section{Discussion} \label{sec:discussion}

The central result of this paper --- Theorem~\ref{thm:snc} --- identifies
support in a fixed base with uniform proof-search in a second-order hereditary
Harrop logic program. This resolves the local question left open by Sandqvist's
global completeness theorem: given a base $\base{B}$, the judgment
$\supp_\base{B} \varphi$ corresponds to a uniform proof in a fragment of
intuitionistic logic with second-order quantifiers over atoms. The philosophical
payoff is immediate. The universal quantifiers over base extensions and atoms in
the semantic clauses do not range over a completed totality of inferential
contexts --- a realist presupposition at odds with the anti-realist foundations
of the programme --- but receive a constructive interpretation internal to the
proof-search framework. The coherence of base-extension semantics is thereby
vindicated on its own terms: the semantics is not merely anti-realist in
aspiration but in execution.

The result depends essentially on the presence of higher-level rules in the
atomic systems, and this dependence is itself illuminating. Sandqvist~\cite{Sandqvist2009CL}
showed that restricting to first-level production rules yields a semantics
characterising classical rather than intuitionistic logic. From the perspective
of the present paper, this corresponds precisely to replacing hereditary Harrop
formulae with Horn clauses --- that is, to a logic-programming language without
hypothetical goals. The loss of higher-level rules forecloses the
continuation-passing encoding that drives the correspondence, and no obvious
repair is available. This is significant because the classical case is in many
respects the more pressing one: a central ambition of Dummett's programme is to
justify classical logic on anti-realist grounds without prior commitment to
bivalence, and Sandqvist~\cite{Sandqvist2005inferentialist,Sandqvist2009CL}
explicitly situates his semantics within that ambition. A computationally
transparent, search-based account of support in the classical setting remains
missing, and we leave its development as an open problem.

Beyond the classical case, the approach suggests a broader research programme.
Base-extension semantics has been developed for a range of
logics~\cite{FOL,IMLL,BI}, and the semantic clauses in each case follow the
same pattern: the meaning of a connective is given by its elimination behaviour,
encoded as a universal quantification over base extensions. Wherever this pattern
holds, and wherever an appropriate analogue of higher-level atomic systems is
available, the present result suggests that support should again reduce to
proof-search in a suitable logic-programming language. Establishing this
systematically --- identifying the right class of programs, the right encoding,
and the right notion of uniform proof for each logic --- would place
proof-theoretic semantics on a uniform computational foundation. We leave this
investigation to future work.

\bibliographystyle{siam}
\bibliography{bib}

@book{Gentzen,
	title        = {{The Collected Papers of Gerhard Gentzen}},
	year         = 1969,
	publisher    = {North-Holland Publishing Company},
	editor       = {M. E. Szabo}
}

@article{NAF,
	title        = {{Definite Formulae, Negation-as-Failure, and the Base-extension Semantics for Intuitionistic Propositional Logic}},
	author       = {Gheorghiu, Alexander V. and Pym, David J.},
	year         = 2023,
	journal      = {{Bulletin of the Section of Logic}},
	publisher    = {Lodz University Press}
}

@article{BI,
	title        = {{Proof-theoretic Semantics for the Logic of Bunched Implications}},
	author       = {Gheorghiu, Alexander V. and Gu, Tao and Pym, David J.},
	year         = 2025,
	journal      = {Studia Logica},
	doi          = {10.1007/s11225-025-10202-z}
}

@inproceedings{IMLL,
	title        = {{Proof-theoretic Semantics for Intuitionistic Multiplicative Linear Logic}},
	author       = {Gheorghiu, Alexander V. and Gu, Tao and Pym, David J.},
	year         = 2023,
	booktitle    = {{Automated Reasoning with Analytic Tableaux and Related Methods --- TABLEAUX}},
	publisher    = {Springer},
	pages        = {367--385},
	isbn         = {978-3-031-43513-3},
	editor       = {Ramanayake, Revantha and Urban, Josef}
}

@article{FOL,
	title        = {{Proof-theoretic Semantics for First-order Logic}},
	author       = {Alexander V. Gheorghiu},
	year         = 2025,
	journal      = {{Logic Journal of the IGPL}},
	volume       = 33,
	number       = 5
}

@misc{SOL,
	title        = {{Proof-theoretic Semantics for Second-order Logic}},
	author       = {Gheorghiu, Alexander V. and Pym, David J.},
	year         = 2025,
	month        = {August},
	doi          = {10.48550/arXiv.2508.07786},
	url          = {https://arxiv.org/abs/2508.07786},
	howpublished = {arXiv preprint},
	eprint       = {2508.07786},
	archiveprefix = {arXiv},
	primaryclass = {math.LO}
}

@incollection{Prawitz1973towards,
	title        = {{Towards a Foundation of a General Proof Theory}},
	author       = {Dag Prawitz},
	year         = 1973,
	booktitle    = {{Studies in Logic and the Foundations of Mathematics}},
	publisher    = {Elsevier},
	volume       = 74,
	pages        = {225--250},
	city         = {Amsterdam}
}

@incollection{Prawitz1971ideas,
	title        = {{Ideas and Results in Proof Theory}},
	author       = {Dag Prawitz},
	year         = 1971,
	booktitle    = {{Studies in Logic and the Foundations of Mathematics}},
	publisher    = {Elsevier},
	volume       = 63,
	pages        = {235--307}
}

@incollection{Dummett1976,
	title        = {{What Is a Theory of Meaning?}},
	author       = {Michael Dummett},
	year         = 1996,
	booktitle    = {{The Seas of Language}},
	publisher    = {Oxford University Press},
	doi          = {10.1093/0198236212.003.0001}
}

@incollection{Mints1988,
  author    = {Mints, Grigori},
  title     = {Gentzen-type Systems and Resolution Rules {P}art {I}: {P}ropositional Logic},
  booktitle = {Computer Logic---{COLOG}-88},
  editor    = {Martin-L\"{o}f, Per and Mints, Grigori},
  series    = {Lecture Notes in Computer Science},
  publisher = {Springer},
  year      = {1988},
  pages     = {198--231},
}

@incollection{Dummett1978,
	title        = {{The Justification of Deduction}},
	author       = {Michael Dummett},
	year         = 1978,
	booktitle    = {{Truth and other Enigmas}},
	publisher    = {Duckworth \& Co},
	pages        = 318
}

@book{Dummett1991logical,
	title        = {{The Logical Basis of Metaphysics}},
	author       = {Michael Dummett},
	year         = 1991,
	publisher    = {Harvard University Press}
}

@article{hallnas1991proof,
	title        = {{A Proof-theoretic Approach to Logic Programming: II. Programs as Definitions}},
	author       = {Halln{\"a}s, Lars and Schroeder-Heister, Peter},
	year         = 1991,
	journal      = {{Journal of Logic and Computation}},
	publisher    = {Oxford University Press},
	volume       = 1,
	number       = 5,
	pages        = {635--660}
}

@article{hallnas1990proof,
	title        = {{A Proof-theoretic Approach to Logic Programming: I. Clauses as Rules}},
	author       = {Halln{\"a}s, Lars and Schroeder-Heister, Peter},
	year         = 1990,
	journal      = {{Journal of Logic and Computation}},
	publisher    = {Oxford University Press},
	volume       = 1,
	number       = 2,
	pages        = {261--283}
}

@incollection{kripke1965semantical,
	title        = {{Semantical Analysis of Intuitionistic Logic I}},
	author       = {Kripke, Saul A},
	year         = 1965,
	booktitle    = {{Studies in Logic and the Foundations of Mathematics}},
	publisher    = {Elsevier},
	volume       = 40,
	pages        = {92--130}
}

@article{miller1989logical,
	title        = {{A Logical Analysis of Modules in Logic Programming}},
	author       = {Miller, Dale},
	year         = 1989,
	journal      = {{The Journal of Logic Programming}},
	publisher    = {Elsevier},
	volume       = 6,
	number       = {1-2},
	pages        = {79--108}
}

@incollection{Schroeder2016atomic,
	title        = {{Atomic Systems in Proof-Theoretic Semantics: Two Approaches}},
	author       = {Thomas Piecha and Peter Schroeder{-}Heister},
	year         = 2016,
	booktitle    = {{Epistemology, Knowledge and the Impact of Interaction}},
	publisher    = {Springer Verlag},
	editor       = {\'{A}ngel Nepomuceno Fern\'{a}ndez and Olga Pombo Martins and Juan Redmond}
}

@article{Piecha2015failure,
	title        = {{Failure of Completeness in Proof-theoretic Semantics}},
	author       = {Piecha, Thomas and de Campos Sanz, Wagner and Schroeder-Heister, Peter},
	year         = 2015,
	journal      = {{Journal of Philosophical Logic}},
	publisher    = {Springer},
	volume       = 44,
	number       = 3,
	pages        = {321--335}
}

@incollection{Piecha2017definitional,
	title        = {{The Definitional View of Atomic Systems in Proof-theoretic Semantics}},
	author       = {Thomas Piecha and Peter Schroeder{-}Heister},
	year         = 2017,
	booktitle    = {{The Logica Yearbook 2016}},
	publisher    = {College Publications London},
	pages        = {185--200}
}

@article{Piecha2019incompleteness,
	title        = {{ { Incompleteness of Intuitionistic Propositional Logic with Respect to Proof-theoretic Semantics } }},
	author       = {Thomas Piecha and Peter Schroeder{-}Heister},
	year         = 2019,
	journal      = {{Studia Logica}},
	publisher    = {Springer},
	volume       = 107,
	number       = 1,
	pages        = {233--246}
}

@incollection{Piecha2016completeness,
	title        = {{Completeness in Proof-theoretic Semantics}},
	author       = {Piecha, Thomas},
	year         = 2016,
	booktitle    = {{Advances in Proof-theoretic Semantics}},
	publisher    = {Springer},
	pages        = {231--251},
	editors      = {Thomas Piecha and Peter Schroeder{-}Heister}
}

@article{MINTS,
	title        = {{On an Inferential Semantics for Intuitionistic Sentential Logic}},
	author       = {Alexander V. Gheorghiu},
	year         = 2025,
	journal      = {{Journal of Logic and Computation}},
	volume       = 35,
	number       = 4
}

@article{Schroeder2012categorical,
	title        = {{The Categorical and the Hypothetical: A Critique of Some Fundamental Assumptions of Standard Semantics}},
	author       = {Schroeder-Heister, Peter},
	year         = 2012,
	journal      = {{Synthese}},
	publisher    = {Springer},
	volume       = 187,
	number       = 3,
	pages        = {925--942}
}

@incollection{SEP-PtS,
	title        = {{Proof-Theoretic Semantics}},
	author       = {Schroeder-Heister, Peter},
	year         = 2018,
	booktitle    = {{The Stanford Encyclopedia of Philosophy}},
	publisher    = {Metaphysics Research Lab, Stanford University},
	editor       = {Edward N. Zalta},
	howpublished = {\url{https://plato.stanford.edu/archives/spr2018/entries/proof-theoretic-semantics/}},
	edition      = {{S}pring 2018}
}

@article{Schroeder2006validity,
	title        = {{Validity Concepts in Proof-theoretic Semantics}},
	author       = {Schroeder-Heister, Peter},
	year         = 2006,
	journal      = {{Synthese}},
	publisher    = {Springer},
	volume       = 148,
	number       = 3,
	pages        = {525--571}
}

@incollection{Schroeder2007modelvsproof,
	title        = {{Proof-Theoretic versus Model-Theoretic Consequence}},
	author       = {Schroeder-Heister, Peter},
	year         = 2008,
	booktitle    = {{The Logica Yearbook 2007}},
	publisher    = {Filosofia},
	editor       = {Michal Pelis}
}

@article{Sandqvist2015IL,
	title        = {{Base-extension Semantics for Intuitionistic Sentential Logic}},
	author       = {Tor Sandqvist},
	year         = 2015,
	journal      = {{Logic Journal of the IGPL}},
	volume       = 23,
	number       = 5,
	pages        = {719--731},
	doi          = {10.1093/jigpal/jzv021}
}

@article{Sandqvist2009CL,
	title        = {{Classical Logic without Bivalence}},
	author       = {Tor Sandqvist},
	year         = 2009,
	journal      = {{Analysis}},
	publisher    = {Oxford University Press},
	volume       = 69,
	number       = 2,
	pages        = {211--218},
	urldate      = {2022-05-03}
}

@phdthesis{Sandqvist2005inferentialist,
	title        = {{An Inferentialist Interpretation of Classical Logic}},
	author       = {Tor Sandqvist},
	year         = 2005,
	school       = {Uppsala University}
}

@article{Stafford2021,
	title        = {{Proof-Theoretic Semantics and Inquisitive Logic}},
	author       = {Will Stafford},
	year         = 2021,
	journal      = {{Journal of Philosophical Logic}},
	pages        = {}
}

@unpublished{PeichaSchroederHeisterStafford2025,
	title        = {{Logics of Proof-theoretic Validity}},
	author       = {Thomas Piecha and Peter Schroeder-Heister and Will Stafford},
	note         = {Presented at the 5th Symposium on Proof-Theoretic Semantics, February 2025, Institute of Philosophy}
}

@article{miller1991uniform,
	title        = {{Uniform proofs as a foundation for logic programming}},
	author       = {Miller, Dale and Nadathur, Gopalan and Pfenning, Frank and Scedrov, Andre},
	year         = 1991,
	journal      = {{Annals of Pure and Applied logic}},
	publisher    = {Elsevier},
	volume       = 51,
	number       = {1-2},
	pages        = {125--157}
}

\end{document}